\definecolor{subsectioncolor}{rgb}{0,0.541,0.855}
\pgfplotsset{compat=1.16}
\title{Monotone one-port circuits}
\author{Thomas Chaffey$^{1}$ and Rodolphe Sepulchre$^{1,2}$
\thanks{The research leading to these results has received funding from the European
Research Council under the Advanced ERC Grant Agreement SpikyControl n. 101054323.}%
\thanks{$^{1}$University of Cambridge, Department of Engineering, Trumpington Street,
Cambridge CB2 1PZ, {\tt\small tlc37@cam.ac.uk}, {\tt\small r.sepulchre@eng.cam.ac.uk}.}
\thanks{$^{2}$KU Leuven, Department of Electrical Engineering (STADIUS), KasteelPark
Arenberg, 10, B-3001 Leuven, Belgium.}
}
\begin{document}
\maketitle

\begin{abstract}
        Maximal monotonicity is explored as a generalization of the linear theory of
        passivity, aiming at an algorithmic input/output analysis of physical models.  The theory is developed for  maximal
        monotone one-port circuits, formed by the series and parallel interconnection
        of basic elements.  These circuits generalize passive LTI transfer functions.
        Periodic input signals are shown to be mapped to periodic output signals, and
        these input-output behaviors can be efficiently computed 
        using a maximal monotone splitting algorithm,
        which decomposes the computation according to the circuit topology. A new
        splitting algorithm is presented, which applies to any monotone one-port circuit
        defined as a port interconnection of monotone elements.
\end{abstract}

\section{Introduction}

Passivity is the backbone of linear circuit theory. As a system theoretic concept,
it provides a fundamental bridge between physics and computation, well
beyond electrical circuits.  The concept of passivity first arose in the study of
interconnections of basic circuit elements \autocite{Foster1924, Cauer1926, Brune1931}, and passive
transfer functions are precisely those that can be realized as port interconnections
of passive elements \autocite{Bott1949}. The KYP lemma provides an
algorithmic framework for the analysis of passive circuits by convex optimization
\autocite{Yakubovich1962, Kalman1963, Popov1964}.
The circuit concept of passivity has generated amongst the most important developments
of control theory over the last several decades, including dissipativity theory
\autocite{Willems1972, Willems1972a}, nonlinear
passivity theory \autocite{Hill1976, Hill1980, Moylan2014}, and passivity based
control \autocite{vanderSchaft2017, Jayawardhana2006, Ortega1998, Sepulchre1997}.

The strong physical and computational properties of linear passive circuits fail to
generalize to the nonlinear world.  The mere existence and uniqueness
 of solutions can no longer be taken for granted, let alone an algorithm to compute
 them.
 Contrary to physical intuition, a nonlinear passive resistor may have regions of
 negative resistance \autocite{Chua1983}.  Negative resistance circuits behave quite
 differently from linear passive circuits, being the source of switches and
 oscillations \autocite{Miranda-Villatoro2022}.  
 This paper explores \emph{maximal monotonicity} as a generalization of linear
 passivity that retains both its physical and algorithmic significance in nonlinear
 circuits.
In the spirit of the early work of Foster \autocite{Foster1924}, Cauer
\autocite{Cauer1926}, Brune \autocite{Brune1931}, Bott and Duffin
\autocite{Bott1949}, and the ``tearing, zooming and linking'' methodology advocated
by Willems \autocite{Willems2007}, we consider a class of systems
formed by port interconnections of basic elements, possibly nonlinear. The primary message of this paper is that,
like passive LTI transfer functions, this class is both physical and computationally
tractable.

This proposal is classical, and indeed the property of maximal monotonicity first
arose in efforts to extend the
tractability of linear, time invariant, passive networks to networks
of nonlinear resistors.   The prototype of a maximal monotone
element was Duffin's
\emph{quasi-linear} resistor \cite{Duffin1946}, a nonlinear resistor with a
non-decreasing $i-v$ characteristic.  Other early forms of monotonicity are
found in the work of \textcite{Golomb1935}, \textcite{Zarantonello1960} and
the work of \textcite{Dolph1961} on ``dissipative'' linear mappings.  
Quasi-linearity was refined by Minty
\cite{Minty1960, Minty1961, Minty1961a} to produce the modern concept of
maximal monotonicity, in the context of an algorithm for solving networks of
nonlinear resistors.  Desoer and Wu \cite{Desoer1974} studied existence and
uniqueness of solutions to networks of nonlinear resistors, capacitors and
inductors defined by maximal monotone relations.

Following the influential paper of Rockafellar in 1976
\cite{Rockafellar1976}, maximal monotonicity has grown to become a
fundamental property in convex optimization \cite{Rockafellar1997, Ryu2016,
Ryu2022, Parikh2013, Bertsekas2011, Combettes2011}, forming the basis of a large body of
work on tractable first-order methods for large scale and nonsmooth
optimization problems, which have seen a surge of interest in the last
decade.  The principle notion is that of \emph{splitting}: for operators which can be
separated into sums of monotone operators, a \emph{splitting algorithm} can be
applied which separates computation for each element of the sum, allowing computation
to be distributed across multiple devices.  This is the basis for many popular
first-order methods, including ADMM and proximal gradient.  For a comprehensive
bibliography, we refer the reader to the literature review of
\autocite[Ch. 2]{Ryu2022}.
It seems, however, that the physical significance of maximal monotonicity has been 
somewhat forgotten.  In this paper, we revisit the
study of nonlinear circuits in light of modern developments in
the theory of splitting algorithms.

Maximal monotonicity also plays a fundamental role in a long line of research on
analysis and simulation of \emph{state space} systems interconnected with non-smooth and set-valued
components, recently surveyed by Brogliato and Tanwani \autocite{Brogliato2020}.  The first connection
between maximal monotone operators and passive linear systems appears in this area, in the work of
\textcite{Brogliato2004}. This work
inspired a line of research on Lur'e systems consisting of a passive LTI system in
feedback with a nonsmooth maximal monotone operator (see, for instance, \autocite{Brogliato2009, Brogliato2011a,
Brogliato2013, Camlibel2016, Adly2017, Brogliato2020}).   Maximal monotone
differential inclusions were first studied by Br\'ezis, who, in particular, proved
existence of periodic solutions when such differential inclusions are periodically
forced by a locally integrable input \autocite[Ch. 3, $\S$6]{Brezis1973}.
Existence and uniqueness of solutions to such differential inclusions have since been
studied extensively \autocite{Camlibel2016, Camlibel2022b}.  Solutions can be constructed
using the classical time-stepping algorithm surveyed in \autocite[$\S$5.2]{Brogliato2020}.  A number of other time domain simulation algorithms have
been developed for Lur'e systems with maximal monotone nonlinearities in the feedback
\autocite{Acary2008, Brogliato2016}.  Two algorithms for computing the periodic response of such
Lur'e systems are given by Heemels \emph{et al.}
\cite{Heemels2017}, which both involve iteratively computing the resolvent of a
differential inclusion.
Lur'e systems with maximal monotone nonlinearities may also be modelled as linear
complementarity systems, and specialized methods for computing steady state
oscillations in such systems have been developed by Ianelli \emph{et al.}
\autocite{Iannelli2011} and Meingast \emph{et al.} \autocite{Meingast2014}.
Other methods for computing periodic responses of nonlinear systems
are either approximate and limited in their
applicability, as in harmonic analysis
\cite{Feldmann1996, Blagquiere1966, Krylov1947, Slotine1991} or involve performing a
transient simulation and waiting for convergence \cite{Aprille1972, Cellier2006}.

In this paper, we also study the periodic response of maximal monotone circuits.
However, our approach in grounded in  input-output rather than state-space descriptions. We
treat circuits as
a physical interconnection of basic components, each of which is required to be a
maximal monotone operator \emph{on the space of $T$-periodic $i-v$ trajectories}.
  Rather
than performing an integration forwards in time, we pose the periodic response of the
circuit as a zero of an operator on the
space of periodic trajectories, and draw on the splitting algorithms of convex
optimization.  This
method is reminiscent of 
frequency response analysis of an LTI transfer function.  Standard splitting algorithms allow
the computation to be organized using the interconnection structure in the case of
purely series or purely parallel circuits, an observation first made in the
conference version of this paper \autocite{Chaffey2021}.  Here we further develop
this idea and show that the circuit topology of the circuit can be put in direct
correspondence with its algorithmic solution. We introduce a splitting algorithm suited to arbitrary series/parallel
circuits, which generalizes existing splitting algorithms, which find zeros of
sums of maximal monotone operators, to nested sums
and inverses of maximal monotone operators.

The remainder of this paper is structured in three parts.
The first part of this paper describes, in general terms, the class of systems formed
from the series/parallel interconnection of maximal monotone one-port elements --
referred to throughout the paper as \emph{monotone one-port circuits}. In
Section~\ref{sec:illustrative_example},
we motivate the study of this class by contrasting it with interconnections of
passive elements.  While monotone one-port circuits retain the fundamental properties
of interconnections of LTI resistors, these properties are lost for passive nonlinear elements.  In
Section~\ref{sec:elements}, we review the basic theory of
        maximal monotone operators. In Section~\ref{sec:1ports}, we revisit port
        interconnections of maximal monotone elements, and formalize the class of
        systems studied in this paper.

        The second part of this paper develops an algorithmic framework for studying
        the periodic response of monotone one-port circuits.
          In section~\ref{sec:computation}, we first show that off-the-shelf
          optimization algorithms can be used to compute the periodic response of
          circuits composed of purely parallel or purely series interconnections.  We
          then introduce a new splitting algorithm (Algorithm~\ref{alg:nested}). 

        The final part of the paper studies particular, physical classes of monotone
        one-port circuits.  
        Section~\ref{sec:RLC} applies the theory to nonlinear RLC circuits, and 
        gives two detailed computational examples, including a
        large-scale circuit consisting of 300,000 elements\footnote{The code for the
        examples of this paper is available at
\url{github.com/ThomasChaffey/monotone-one-port-circuits}.}.
        Section~\ref{sec:conductance} applies the theory to memristive systems, using
        the specific example of a neuronal potassium conductance.

\section{Motivating example}\label{sec:illustrative_example}

We begin with a simple example, which motivates the developments of this paper:
the series interconnection of two resistors (Figure~\ref{fig:series-resistors}).

\begin{figure}[h!]
\centering
        \includegraphics{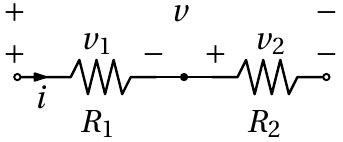}
        \caption{Series interconnection of two resistors.}
        \label{fig:series-resistors}
\end{figure}

Consider first the linear, time invariant case, $v_j = R_j i_j$, where each $R_j > 0$.  The series
interconnection maps the applied voltage $v$ to the port current $i$ by the relation
$i = v/(R_1 + R_2)$.  The interconnection has the property that
for any $T$-periodic applied voltage, there exists a unique $T$-periodic port
current.  This property seems natural for a circuit of passive elements.

The fundamentals of the circuit remain unchanged if the resistor $R_1$ is replaced
by a linear, time-invariant and passive transfer function $R_1(s)$, which obeys, by the positive real lemma, $\Re R_1(j\omega) \geq 0$
for all $\omega$.  
The resulting transfer function from current to voltage is
given by $1/(R_1(s) + R_2)$; given a $T$-periodic voltage $v$, the unique corresponding
current can be found by taking the Fourier series of $v$ and multiplying each
coefficient $\hat v_n$ by $1/(R_1(jnw\pi/T) + R_2)$.  Since $\Re R_1(j\omega) \geq
0$ and $R_2 > 0$, this complex number always has finite magnitude.

If we replace $R_1$ by a nonlinear, but passive resistor, however,
a $T$-periodic voltage no longer guarantees a $T$-periodic current.   
A passive resistor can have regions of negative slope in its
$i-v$ curve (\textcite{Chua1983} give a catalogue of physical
examples).  The voltage to current map of such a resistor is then multi-valued, and
it is not guaranteed that the current corresponding to a $T$-periodic voltage is
uniquely defined, nor $T$-periodic.  

If, however, we replace $R_1$ with a monotone nonlinear resistor,
the fundamentals of the LTI case remain unchanged.  Monotonicity of a resistor means
its $i-v$ curve is nondecreasing. The interconnection with $R_2 > 0$ means the
$i-v$ curve of the circuit is strictly increasing,  so invertibility of the
interconnection is retained, as illustrated in
Figure~\ref{fig:monotone-passive}.  This idea will be formalized for monotone RLC
circuits in Theorem~\ref{thm:periodic}.

\begin{figure}[h]
        \centering
        \includegraphics{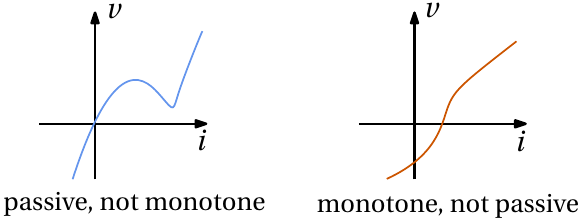}
        \caption{The $i$-$v$ curves of a passive and a monotone resistor.}%
        \label{fig:monotone-passive}
\end{figure}
\section{Maximal monotone relations}
\label{sec:elements}

We begin by introducing some mathematical preliminaries.

\subsection{Operators and relations}

\begin{definition}
        An operator on a space $X$ is a set-valued mapping $S:X \rightrightarrows X$.  The
        \emph{relation} or \emph{graph} of an operator $S$ is a subset
        $\operatorname{gra} S \subseteq X
        \times X$ of input-output pairs $(u, y)$.
\end{definition}

Throughout this paper, we will treat an operator and its relation as equivalent, and
often merely write $S$ when we are strictly referring to $\operatorname{gra} S$.
The operator notation $y \in S(u)$ is equivalent to the relational notation $(u, y)
\in \operatorname{gra} S$.

The usual operations on functions can be extended to set-valued operators: 
\begin{IEEEeqnarray*}{rCl}
        S^{-1} &=& \{ (y, u) \; | \; y \in S(u) \}\\
        S + R &=& \{ (x, y + z) \; | \; (x, y) \in S,\; (x, z) \in R \}\\
        SR &=& \{ (x, z) \; | \; \exists\, y \text{ s.t. } (x, y) \in R,\; (y, z) \in S \}.
\end{IEEEeqnarray*}
Note that the relational inverse $S^{-1}$ always exists, but in general, $S S^{-1}
\neq I$, where $I$ is the identity relation $\{(x, x)\;|\;x \in X\}$.

\subsection{Signal spaces}

Throughout this paper, we let $\mathcal{H}$ be a
Hilbert space with inner product $\ip{\cdot}{\cdot}$ and induced norm
$\norm{x} = \sqrt{\ip{x}{x}}$.  
The particular Hilbert spaces we consider are spaces of periodic signals, described
by a single period. A trajectory $w(t)$ is said to be
$T$-periodic if $w(t) = w(t + T)$ for all $t$.

Let $L_{2,T}$ denote the space of signals $u: [0, T] \to \R^n$ which are square integrable, that is,
\begin{IEEEeqnarray*}{rCl}
        \int_{0}^{T} u\tran(t) u(t) \dd{t} < \infty.
\end{IEEEeqnarray*}
This is a Hilbert space with inner product
\begin{IEEEeqnarray*}{rCl}
        \ip{u}{y} \coloneqq \int_{0}^{T} u\tran(t)y(t) \dd{t}
\end{IEEEeqnarray*}
and induced norm $\norm{u} \coloneqq \sqrt{\ip{u}{u}}$.
The discrete-time counterpart of $L_{2, T}$ is denoted by $l_{2, T}$, the space of
length $T$ sequences which are square summable:
\begin{IEEEeqnarray*}{rCl}
        \Sigma_{t = 0}^T u\tran(t)u(t) < \infty.
\end{IEEEeqnarray*}
Again,
this is a Hilbert space with inner product
\begin{IEEEeqnarray*}{rCl}
        \ip{u}{y} \coloneqq \Sigma_{t = 0}^{T} u\tran(t)y(t)
\end{IEEEeqnarray*}
and induced norm $\norm{u} \coloneqq \sqrt{\ip{u}{u}}$.

$L_{2, [0, \infty)}$ and $L_{2, (-\infty, \infty)}$ are defined analogously to $L_{2,
T}$, but with time intervals $[0, \infty)$ and $(-\infty, \infty)$, respectively.

\subsection{Maximal monotonicity}

The property of \emph{monotonicity} connects the physical property of energy
dissipation in a device to algorithmic analysis methods.  

Monotonicity on $\mathcal{H}$ is defined as follows.

\begin{definition}
        An operator $S: \mathcal{H}\rightrightarrows\mathcal{H}$ is called 
        \emph{monotone} if
\begin{IEEEeqnarray*}{rCl}
        \langle u_1 - u_2 | y_1 - y_2 \rangle \geq 0
\end{IEEEeqnarray*}
for any $(u_1, y_1), (u_2, y_2) \in \operatorname{gra} S$.
A monotone operator is called \emph{maximal} if its relation is not properly
contained in the relation of any
other monotone operator.
\end{definition}

By way of example, an operator $S: \R \rightrightarrows \R$ is monotone if its graph is
non-decreasing, and maximal if its graph has no endpoints.
Note that this definition refers to monotonicity in the operator theoretic sense, and
this is distinct from the notion of monotonicity in the sense of partial order
preservation by a state-space system (see, for example, \cite{Angeli2003a}).

Monotonicity is preserved under a number of operations. 
The proof of the following lemma may be found in \cite{Ryu2021a}.

\begin{lemma}
        \label{lem:monotone_properties}
        Consider operators $G$ and $F$ which are monotone on $\mathcal{H}$.  Then
        \begin{enumerate}
                \item $G^{-1}$ is monotone; \label{inversion}
                \item $G + F$ is monotone; \label{sum}
                \item $\alpha G$ is monotone for $\alpha > 0$.
        \end{enumerate}
\end{lemma}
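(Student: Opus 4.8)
The plan is to verify each of the three claims directly from the definition of monotonicity, using nothing more than the bilinearity and symmetry of the inner product on $\mathcal{H}$; the only genuine care needed is in unwinding the relational definitions of $S^{-1}$, $G+F$ and $\alpha G$ recalled above.

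For claim~\ref{inversion}, any two pairs in $\operatorname{gra} G^{-1}$ have the form $(y_1, u_1)$ and $(y_2, u_2)$ with $(u_1, y_1), (u_2, y_2) \in \operatorname{gra} G$, by definition of the relational inverse. Monotonicity of $G$ gives $\langle u_1 - u_2 \mid y_1 - y_2 \rangle \geq 0$, and by symmetry of the inner product this is precisely $\langle y_1 - y_2 \mid u_1 - u_2 \rangle \geq 0$, which is the monotonicity inequality for $G^{-1}$ on those pairs. For claim~3, a pair in $\operatorname{gra}(\alpha G)$ has the form $(x_i, \alpha y_i)$ with $(x_i, y_i) \in \operatorname{gra} G$, and $\langle x_1 - x_2 \mid \alpha y_1 - \alpha y_2 \rangle = \alpha \langle x_1 - x_2 \mid y_1 - y_2 \rangle \geq 0$ since $\alpha > 0$.

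For claim~\ref{sum}, I would take $(x_1, z_1), (x_2, z_2) \in \operatorname{gra}(G+F)$; by definition of the sum there exist $(x_i, y_i) \in \operatorname{gra} G$ and $(x_i, w_i) \in \operatorname{gra} F$ with $z_i = y_i + w_i$ for $i = 1, 2$, both summands crucially sharing the same input $x_i$. Bilinearity then gives $\langle x_1 - x_2 \mid z_1 - z_2 \rangle = \langle x_1 - x_2 \mid y_1 - y_2 \rangle + \langle x_1 - x_2 \mid w_1 - w_2 \rangle$, and each term on the right is nonnegative by monotonicity of $G$ and of $F$ respectively. There is no substantive obstacle here — this is a standard fact, which is why the excerpt cites \cite{Ryu2021a} rather than proving it. If anything, the one subtlety worth flagging is the decomposition step for claim~\ref{sum}: an element of the graph of a sum of set-valued operators need not decompose uniquely, so one argues for an arbitrary valid decomposition, and since the inequality holds for every such decomposition nothing is lost.
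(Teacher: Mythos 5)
Your proof is correct: the paper gives no proof of this lemma at all, deferring instead to the cited reference, and your direct verification from the relational definitions of $G^{-1}$, $G+F$ and $\alpha G$ is exactly the standard argument that reference supplies. All three cases are handled properly, including the sensible remark that the non-unique decomposition in the sum case is harmless because the inequality holds for every valid decomposition.
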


Maximality is preserved under inversion.  However, in general, maximality is not 
preserved when two relations are added (indeed, 
their sum may be empty). We make the following assumption on summations throughout
the rest of this paper, which guarantees maximality of the sum, by \cite[Thm. 1]{Rockafellar1970}.

\begin{assumption}
        \label{ass:domains}
        Any summation of two operators $G$ and $F$ obeys
                        \begin{IEEEeqnarray*}{rrCl}
                               & \interior \dom F \cap \dom G &\neq& \varnothing\\
                                \text{or } & \interior \dom G \cap \dom F &\neq&
                                \varnothing,
                        \end{IEEEeqnarray*}
                        where $\dom S$ denotes the domain of the operator $S$, and
                        $\interior$ denotes the interior of a set.
\end{assumption}

This assumption is sufficient (but not necessary) for the existence of solutions to
the summation (that is, the resulting relation is nonempty).  We omit the proof of this fact.

\subsection{Stronger monotonicity properties}
\label{sec:more_properties}

\begin{definition}
        An operator $S: \mathcal{H}\rightrightarrows\mathcal{H}$ has a \emph{Lipschitz constant of} $\lambda>0$, or is
        \emph{$\lambda$-Lipschitz} if, for all
        $(u_1, y_1), (u_2, y_2) \in \operatorname{gra} S$,
        \begin{equation*}
                \norm{y_1 - y_2} \leq \lambda\norm{u_1 - u_2}.
        \end{equation*}
        If $\lambda < 1$, $S$ is called a \emph{contraction}.  If $\lambda = 1$, $S$ is called
        \emph{nonexpansive}.
\end{definition}
Note that if $S$ is $\lambda$-Lipschitz, it is also $\bar\lambda$-Lipschitz for all
$\bar\lambda > \lambda$.

\begin{definition}
        Given $\mu > 0$, an operator $S: \mathcal{H}\rightrightarrows\mathcal{H}$ is $\mu$-\emph{coercive} or
        $\mu$-\emph{strongly monotone} if, for all $(u_1, y_1), (u_2, y_2) \in
        \operatorname{gra} S$,
        \begin{equation*}
                \ip{u_1 -u_2}{y_1 - y_2} \geq \mu\norm{u_1 - u_2}^2.
        \end{equation*}
        $S$ is called $\mu$-\emph{hypomonotone} in the case that $\mu < 0$.
        If the sign of $\mu$ is unknown, we simply say $S$ is
        $\mu$-\emph{monotone}.
\end{definition}

\begin{definition}
        Given $\gamma > 0$, an operator $S: \mathcal{H}\rightrightarrows\mathcal{H}$ is $\gamma$-\emph{cocoercive} 
        if, for all $(u_1, y_1), (u_2, y_2) \in \operatorname{gra} S$,
        \begin{equation*}
                \ip{u_1 - u_2}{y_1 - y_2} \geq \gamma \norm{y_1 - y_2}^2.
        \end{equation*}
        $S$ is called $\gamma$-\emph{cohypomonotone} in the case that $\gamma < 0$.
\end{definition}

It is seen immediately that $F$ is $\mu$-coercive if and only if $F^{-1}$ is
$\mu$-cocoercive.  It also follows from the Cauchy-Schwarz inequality that $F$
has a Lipschitz constant of  $1/\gamma$ if $F$ is $\gamma$-cocoercive.  Finally, if
$A$ is $\mu$-coercive (resp. $\gamma$-cocoercive) and $B$ is monotone, $A + B$ is
is $\mu$-coercive (resp. $\gamma$-cocoercive).  For more details on these
properties, we refer the reader to \autocite[\S 2.2]{Ryu2021a} and
\autocite{Giselsson2021}.

\section{Monotone one-port circuits}
\label{sec:1ports}

The systems considered in this paper are electrical one-port circuits.  Analogous to
the classical realization of passive LTI transfer functions by series/parallel
interconnections of resistors, capacitors and inductors \autocite{Bott1949}, we
consider the class of systems which can be realized as series/parallel
interconnections of basic one-port elements, which are modelled as monotone operators.  

One-port circuits have two external terminals.  The port voltage $v$ may be measured
across these terminals, and the port current $i$ may be measured through them. We assume that each of
these variables takes values in $\R$.  
A one-port circuit $E$ is defined
by a relation on $L_{2, T}$ between current and voltage.  We denote by $d(E) \in \{i\to v, v \to i\}$ the
direction of the relation $E$, either current to voltage (current controlled) or
voltage to current (voltage controlled).  We will often denote current controlled
circuits by $R$, and voltage controlled circuits by $G$.  These may in general be
arbitrary impedances or admittances, and are not restricted to being memoryless.
We say that $E$ is a \emph{$\mu$-monotone one-port} if it is defined by a $\mu$-monotone relation.

\subsection{Series and parallel interconnections}

Two one-ports may be combined to build a new one-port by series or parallel
interconnection.  These are illustrated in Figure~\ref{fig:series-parallel-circuit}.
\begin{figure}[ht]
         \centering
         \includegraphics{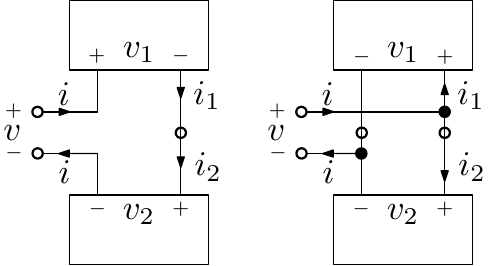}
         \caption{Series (left) and parallel (right) interconnections of two 1-ports.}
         \label{fig:series-parallel-circuit}
\end{figure}

When two one-ports are
connected in parallel, their relations must be from voltage to current.  If they are
not, one or both relations must be inverted before interconnection.  Let $G_1$ and
$G_2$ be two one-port circuits such that $d(G_1) = d(G_2) = v \to i$.
For a parallel interconnection, the composition of Kirchoff's laws and the relations
$G_1$ and $G_2$ creates a natural forward relation from voltage to current, as
follows.
\begin{enumerate}
        \item KVL: $v = v_1 = v_2$
        \item Device: $(v_1, i_1) \in G_1$, $\quad(v_2, i_2) \in G_2$
        \item KCL: $i_1 + i_2 = i$.
\end{enumerate}
We therefore have a new relation $G = G_1 + G_2$, $d(G) = v \to i$.
This is illustrated in the left of Figure~\ref{fig:parallel-block}.  Calculating the inverse
relation, we have
\begin{IEEEeqnarray*}{rCl}
        i &\in& (G_1 + G_2)(v)\\
        G_1(v) &\in& i - G_2(v)\\
        v &\in& G_1^{-1}(i-G_2(v)),
\end{IEEEeqnarray*}
which is the negative feedback interconnection of $G_1^{-1}$ and $G_2$, illustrated in the right of Figure~\ref{fig:parallel-block}.

For a series interconnection, the roles of current and voltage are reversed.  Letting
$R_1$ and $R_2$ be two one-port circuits such that $d(R_1) = d(R_2) = i \to v$, their
series interconnection gives a relation from current to voltage, as follows. 
\begin{enumerate}
        \item KCL: $i = i_1 = i_2$
        \item Device: $(i_1, v_1) \in R_1$, $\quad(i_2, v_2) \in R_2$
        \item KVL: $v_1 + v_2 = v$.
\end{enumerate}
The new relation is $R = R_1 + R_2$, with $d(R) = i \to v$.  The inverse relation,
from $v$ to $i$, is the negative feedback interconnection of $R_1^{-1}$ and $R_2$.
This is illustrated in Figure~\ref{fig:series-block}.  Properties of a parallel
interconnection always hold for a series interconnection when the roles of $i$ and
$v$ are exchanged; as such, in several results which follow, we will state results for
parallel interconnections only.

\begin{figure}
        \centering
        \includegraphics{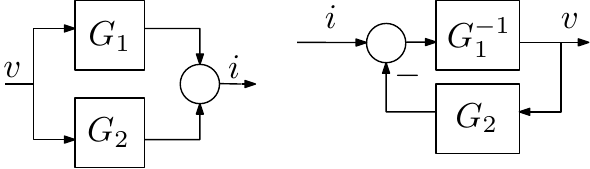}
        \caption{Block diagram of parallel interconnection, illustrating parallel
        forward relation from voltage to current, and negative feedback relation from
        current to voltage.}
        \label{fig:parallel-block}
\end{figure}
\begin{figure}
        \centering
        \includegraphics{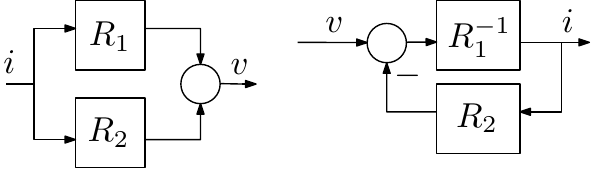}
        \caption{Block diagram of series interconnection, illustrating
        parallel forward relation from current to voltage, and negative feedback
        relation from voltage to current.}
        \label{fig:series-block}
\end{figure}

A standard result is that monotonicity of circuits is preserved under series and
parallel interconnection \autocite{Passty1986}.
Precisely, we have the following.
\begin{proposition}\label{prop:series_parallel}
         \begin{enumerate}
                 \item Let $E_1$ and $E_2$ be monotone one-port circuits such that
                         $d(E_1), d(E_2) \in \{i\to v, v \to i\}$. Then the series and
                         parallel interconnections of $E_1$ and $E_2$ are both
                         monotone one-ports. \label{prop:series_parallel:one}
                 \item Let $G_1$ and $G_2$ be one-port circuits such that $G_1$ is
                         $\alpha$-monotone, $G_2$ is $\beta$-monotone, and
                         $d(G_1) = d(G_2) = v \to i$. Then the parallel
                         interconnection of $G_1$ and $G_2$ is $(\alpha +
                         \beta)$-monotone. \label{prop:series_parallel:two}
         \end{enumerate}
\end{proposition}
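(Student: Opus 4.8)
The plan is to reduce everything to the corresponding statements about sums of monotone operators, since the excerpt has already shown that the series and parallel interconnections are exactly operator sums. For part~\ref{prop:series_parallel:one}, recall from the discussion above that the parallel interconnection of $G_1$ and $G_2$ (after inverting, if necessary, so that $d(G_1)=d(G_2)=v\to i$) is the operator $G_1+G_2$, and the series interconnection of $R_1$ and $R_2$ (after inverting so that $d(R_1)=d(R_2)=i\to v$) is $R_1+R_2$. So the first step is to invoke Lemma~\ref{lem:monotone_properties}: inversion preserves monotonicity (item~\ref{inversion}), so after any re-orientation of the relations we still have monotone operators; and the sum of two monotone operators is monotone (item~\ref{sum}). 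Since a series/parallel interconnection of monotone one-ports is, up to inversion, the sum of two monotone operators, it is monotone, and hence defines a monotone one-port. The only thing to be careful about is the re-orientation: $E_1$ and $E_2$ may be given with arbitrary directions, but each can be inverted at most once to bring both into the common direction required by the interconnection, and Lemma~\ref{lem:monotone_properties}\eqref{inversion} covers this.

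For part~\ref{prop:series_parallel:two}, I would argue directly from the definition of $\mu$-monotonicity applied to the sum. Take two input-output pairs $(v,i)$ and $(v',i')$ of the parallel interconnection $G_1+G_2$. By definition of the sum, $i = i_1 + i_2$ with $(v,i_1)\in G_1$, $(v,i_2)\in G_2$, and similarly $i' = i_1' + i_2'$ with $(v',i_1')\in G_1$, $(v',i_2')\in G_2$. Then
\begin{IEEEeqnarray*}{rCl}
\ip{v - v'}{i - i'} &=& \ip{v - v'}{i_1 - i_1'} + \ip{v - v'}{i_2 - i_2'}\\
&\geq& \alpha\norm{v - v'}^2 + \beta\norm{v - v'}^2\\
&=& (\alpha + \beta)\norm{v - v'}^2,
\end{IEEEeqnarray*}
using $\alpha$-monotonicity of $G_1$ and $\beta$-monotonicity of $G_2$ on the two terms. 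This is exactly the defining inequality for $(\alpha+\beta)$-monotonicity of $G_1 + G_2$, which is the parallel interconnection. (Note the argument is sign-agnostic: it holds whether $\alpha,\beta$ are positive, negative, or of mixed sign, so it simultaneously covers the strongly monotone and hypomonotone cases, consistent with the convention introduced in the excerpt.)

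I do not expect a serious obstacle here: both parts are essentially bookkeeping on top of Lemma~\ref{lem:monotone_properties} and the interconnection formulas already derived. The one point that requires a sentence of care is making the identification ``interconnection $=$ sum'' precise in part~\ref{prop:series_parallel:one} when the given directions of $E_1,E_2$ do not match — one must state that inverting a relation to change its direction is legitimate because maximality/monotonicity are preserved under inversion, and because the inverse of the sum is the feedback form shown in Figures~\ref{fig:parallel-block}--\ref{fig:series-block}, so the forward and inverse relations carry the same monotonicity. Everything else is the two-line inner-product computation above and its series-interconnection mirror image, which by the symmetry remark in the text need not be written out separately.
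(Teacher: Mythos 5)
Your proposal is correct and follows essentially the same route as the paper: Part~\ref{prop:series_parallel:one} by reducing the series/parallel interconnections to sums (with inversions as needed) and invoking Lemma~\ref{lem:monotone_properties}, and Part~\ref{prop:series_parallel:two} by the same direct inner-product computation showing the sum of an $\alpha$-monotone and a $\beta$-monotone operator is $(\alpha+\beta)$-monotone. Your relational phrasing of the Part~\ref{prop:series_parallel:two} computation is, if anything, slightly more careful than the paper's single-valued notation.
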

\begin{proof}
        The proof of Part~\ref{prop:series_parallel:one} follows directly from the
        preservation of monotonicity under inversion and addition
        (Lemma~\ref{lem:monotone_properties}).
        Part~\ref{prop:series_parallel:two} 
        follows from the fact that if $E_1$ is $\alpha$-monotone and $E_2$ is
        $\beta$-monotone, $E_1 + E_2$ is $(\alpha + \beta)$-monotone:
        \begin{IEEEeqnarray*}{+cl+x*}
                &\ip{u_1 - u_2}{(E_1 + E_2)u_1 - (E_1 + E_2)u_2} \\
                =& \ip{u_1 - u_2}{E_1u_1 - E_1u_2} + \ip{u_1 - u_2}{E_2u_1 - E_2u_2}\\
                \geq& (\alpha + \beta)\norm{u_1 - u_2}^2. &\qedhere
        \end{IEEEeqnarray*}
\end{proof}

Repeatedly applying series and parallel interconnections allows a collection of
one-port circuits to be assembled into a single, larger one-port circuit,  using the
relational operations of inversion and addition.

We conclude this section by remarking that the preservation of monotonicity under
port interconnection, proved in Proposition~\ref{prop:series_parallel}, can be
reinterpreted in terms of negative feedback.  As shown in
Figures~\ref{fig:parallel-block}, the negative feedback interconnection
of  two operators $F$ and $G$ can be represented as a parallel interconnection of $F^{-1}$ and $G$.
Proposition~\ref{prop:series_parallel} then allows us to recover the incremental form
of the fundamental theorem of passivity.

\begin{corollary}
        Given two operators $F$ and $G$, each monotone on a Hilbert space
        $\mathcal{H}$, their negative feedback interconnection $(F^{-1} +
        G)^{-1}$ is monotone on $\mathcal{H}$.
\end{corollary}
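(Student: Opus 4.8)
The plan is to reduce the corollary directly to Proposition~\ref{prop:series_parallel}, part~\ref{prop:series_parallel:one}, together with Lemma~\ref{lem:monotone_properties}. The key observation, already noted in the text preceding the statement, is purely algebraic: the negative feedback interconnection of $F$ and $G$ has input-output relation $(F^{-1} + G)^{-1}$, and the parallel interconnection of $F^{-1}$ and $G$ (as voltage-controlled one-ports) is exactly $F^{-1} + G$. So the proof is a two-line chase of the relational operations.

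Concretely, I would proceed as follows. First, since $F$ is monotone on $\mathcal{H}$, its relational inverse $F^{-1}$ is monotone by Lemma~\ref{lem:monotone_properties}\eqref{inversion}. Second, $F^{-1}$ and $G$ are both monotone, so by Lemma~\ref{lem:monotone_properties}\eqref{sum} their sum $F^{-1} + G$ is monotone. Third, inverting once more and applying Lemma~\ref{lem:monotone_properties}\eqref{inversion} again, $(F^{-1} + G)^{-1}$ is monotone. This is really just the chain ``monotone $\Rightarrow$ inverse monotone $\Rightarrow$ sum monotone $\Rightarrow$ inverse monotone,'' i.e. an instance of the series/parallel closure already packaged in Proposition~\ref{prop:series_parallel}\eqref{prop:series_parallel:one} applied to the one-ports $F$ (viewed as $i \to v$, so that $F^{-1}$ enters the parallel sum) and $G$ (viewed as $v \to i$).

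I do not anticipate any genuine obstacle here, since the result is an immediate corollary of machinery already established. The only point that deserves a word of care is the interpretation step: one should note explicitly that the block diagram of Figure~\ref{fig:parallel-block} identifies the negative feedback loop around $F$ and $G$ with the parallel interconnection of $F^{-1}$ and $G$, so that Proposition~\ref{prop:series_parallel} applies verbatim. One might also remark, for completeness, that Assumption~\ref{ass:domains} is what guarantees the sum $F^{-1} + G$ is nonempty (and in fact maximal when $F$ and $G$ are maximal), though the bare monotonicity claim of the corollary does not require it. I would keep the written proof to two or three sentences, citing Lemma~\ref{lem:monotone_properties} and Proposition~\ref{prop:series_parallel} and pointing to the feedback/parallel correspondence.

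\begin{proof}
        By Lemma~\ref{lem:monotone_properties}\eqref{inversion}, $F^{-1}$ is monotone.
        The negative feedback interconnection of $F$ and $G$ is, as illustrated in
        Figure~\ref{fig:parallel-block}, the parallel interconnection of $F^{-1}$ and
        $G$, with input-output relation $F^{-1} + G$; this is monotone by
        Lemma~\ref{lem:monotone_properties}\eqref{sum}. Applying
        Lemma~\ref{lem:monotone_properties}\eqref{inversion} once more,
        $(F^{-1} + G)^{-1}$ is monotone. Equivalently, this is
        Proposition~\ref{prop:series_parallel}\eqref{prop:series_parallel:one} applied
        to the monotone one-ports $F$ and $G$.
\end{proof}
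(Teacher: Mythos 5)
Your proof is correct and follows exactly the route the paper intends: the corollary is stated there as an immediate consequence of the feedback/parallel correspondence of Figure~\ref{fig:parallel-block} together with the closure of monotonicity under inversion and addition (Lemma~\ref{lem:monotone_properties}, packaged in Proposition~\ref{prop:series_parallel}). No gaps; your remark that Assumption~\ref{ass:domains} is only needed for nonemptiness/maximality, not for bare monotonicity, is a fair and accurate aside.
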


\section{Algorithmic steady-state analysis of series/parallel monotone one-ports}
\label{sec:computation}
In this section, we develop an algorithmic method for computing the periodic
input/output behavior 
of a monotone one-port. We consider a circuit
made of series and parallel interconnections of one-port elements, each defining a
(discrete time) monotone operator on $l_{2, T}$.  The circuit defines a monotone
operator $M$.  Concrete examples of such circuits are given in Sections~\ref{sec:RLC}
and~\ref{sec:conductance}.

Without loss of generality, we consider the problem of computing the ``output''
current $i^\star$ of the monotone operator $M$ corresponding to an ``input'' voltage
$v^\star$.

We compute the solution as the fixed point of an iterative splitting algorithm
determined from the series and parallel structure of the circuit.  The algorithm is
first presented for two elements, then generalized to an arbitrary composition of
series and parallel interconnections.

\subsection{Splitting algorithms for two element circuits} \label{sec:splitting}
\label{sec:fixed_point_algorithms}

There is a large body of literature on splitting algorithms, which solve problems of the form $0 \in M_1(u) +
M_2(u)$, where $M_1 + M_2$ is a maximal monotone relation.  If $M$ consists of two
elements, connected in series or parallel, we can convert our problem to this form by
writing $0 \in M_1(i) + M_2(i) - v^\star$ (assuming a series interconnection - the
parallel interconnection is obtained by exchanging $i$ and $v$).  The offset
$-v^\star$ does not affect the monotonicity properties of $M$.  Splitting algorithms
distribute the computation on the
components $M_1$ and $M_2$.  They are useful when computation for the individual components is
easy, but computation for their sum is hard.  Here, we describe two splitting algorithms - the
forwards/backwards splitting, and the Douglas-Rachford splitting.
Given an operator $S$ and a
scaling factor $\alpha$, the $\alpha$-resolvent of $S$ is defined to be the operator
\begin{IEEEeqnarray*}{rCl}
        \res_{\alpha S} \coloneqq (I + \alpha S)^{-1}.
\end{IEEEeqnarray*}
If $S$ is maximal monotone, $\res_{S}$ is single-valued \autocite{Minty1961}.

\subsubsection*{Forward/backward splitting}

This is the simplest splitting algorithm \autocite{Passty1979, Gabay1983, Tseng1988}.  Suppose $M_1$ and $\res_{\alpha M_2}$
are single-valued.  Then:
\begin{IEEEeqnarray*}{lrCl}
       & 0 &\in& M_1(x) + M_2(x)\\
        \iff & 0 & \in & x - \alpha M_1(x) - (x + \alpha M_2(x))\\
        \iff &(I + \alpha M_2)x &\ni& (I - \alpha M_1)x \\
        \iff & x &=& \res_{\alpha M_2} (I - \alpha M_1) x.
\end{IEEEeqnarray*}
The fixed point iteration $x^{j+1} = \res_{\alpha M_2}(x^j - \alpha M_1(x^j))$ is the forward/backward splitting algorithm. 
When $M_1$ and $M_2$ are monotone, the convergence conditions for the
forward/backward splitting are standard in the literature (see, for instance,
\autocite{Ryu2016}).  
\textcite[$\S6$]{Giselsson2021} generalize these conditions to cases where $M_1 +
M_2$ is monotone, but either $M_1$ or $M_2$ is hypomonotone.  These conditions may be summarised as follows.

\begin{proposition}
        \label{prop:forward-backward}
        Let $\mu \geq 0$, $\omega \geq 0$ and $\beta > 0$, and $M_1$ and $M_2$ be
        operators on a Hilbert space $\mathcal{H}$.
        The forward/backward algorithm, with scaling factor $\alpha \in (0, 2/(\beta + 2\mu))$, converges to a zero of $M_1 + M_2$, if one
        exists, in each of
        the following cases:
        \begin{itemize}
                \item $M_1$ is maximally $\mu$-monotone, $M_1 - \mu I$ is
                        $1/\beta$-cocoercive, $M_2$ is maximally $(-\omega)$-monotone
                        and $\mu \geq \omega$.
                \item $M_1$ is maximally $(-\omega)$-monotone, $M_1 + \omega I$ is
                        $1/\beta$-cocoercive, $M_2$ is maximally $\mu$-monotone
                        and $\mu \geq \omega$.
                \item $M_1$ is $\beta$-Lipschitz, $M_2$ is maximally $\mu$-monotone and
                        $\mu \geq \beta$.
        \end{itemize}
\end{proposition}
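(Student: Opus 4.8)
The plan is to recognize Proposition~\ref{prop:forward-backward} not as a fresh result but as a direct packaging of the convergence theory for the forward/backward iteration, as developed in the standard references and generalized in \textcite[\S6]{Giselsson2021}. The common thread in all three bullet points is that the operator $T \coloneqq \res_{\alpha M_2}(I - \alpha M_1)$ whose fixed points we seek (by the equivalences just established in the text) is an averaged, or more precisely a contractive-in-an-appropriate-metric, operator, so that its Krasnosel'ski\u\i--Mann iteration $x^{j+1} = Tx^j$ converges to a fixed point whenever one exists. I would structure the proof around verifying, in each case, the two ingredients that feed into this: (i) the forward step $I - \alpha M_1$ is nonexpansive, or negatively averaged, for $\alpha$ in the stated range; and (ii) the backward step $\res_{\alpha M_2}$ is well-defined (single-valued, full-domain) and suitably contractive, which requires $\alpha M_2$ to be maximally monotone after the metric is adjusted to absorb the hypomonotonicity.

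First I would fix the first bullet as the representative case. Here $M_1 - \mu I$ is $1/\beta$-cocoercive, so by the cocoercive/averaged calculus (the classical estimate $\|(I - \alpha L)x - (I - \alpha L)y\|^2 \le \|x-y\|^2 - \alpha(2/\beta - \alpha)\|Lx - Ly\|^2$ for a $1/\beta$-cocoercive $L$, applied with $L = M_1 - \mu I$), the map $I - \alpha M_1 = (I - \alpha\mu I) - \alpha(M_1 - \mu I)$ is seen to be a contraction of modulus related to $(1 - \alpha\mu)$ composed with an averaged map, provided $\alpha < 2/\beta$. Simultaneously, since $M_2$ is maximally $(-\omega)$-monotone, $M_2 + \omega I$ is maximally monotone, its resolvent is single-valued with full domain by Minty, and $\res_{\alpha M_2}$ can be rewritten in terms of $\res_{\alpha(M_2 + \omega I)}$ up to a metric rescaling by $(1 + \alpha\omega)$; the nonexpansiveness of the genuine resolvent then transfers to a Lipschitz bound of $1/(1 - \alpha\omega)$ for $\res_{\alpha M_2}$ in a modified norm. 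Combining the contraction factor from the forward step with this expansion factor from the backward step, and using the hypothesis $\mu \ge \omega$ together with $\alpha < 2/(\beta + 2\mu)$, the composed operator $T$ has Lipschitz modulus strictly less than one (or is averaged) in the weighted norm, so Banach/KM iteration converges to the unique (resp. to a) fixed point, which is a zero of $M_1 + M_2$ by the equivalence chain.

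The second bullet is the mirror image under the symmetric role of $M_1$ and $M_2$ (with the Peaceman--Rachford-type bookkeeping that the forward step now carries the hypomonotonicity and the backward step the strong monotonicity), and the third bullet is the classical Lipschitz-plus-strongly-monotone case, where $I - \alpha M_1$ is only Lipschitz (modulus $\le \sqrt{1 + \alpha^2\beta^2}$ or, more sharply, one uses the three-point inequality directly) but $\res_{\alpha M_2}$ is a genuine contraction of modulus $1/(1 + \alpha\mu)$ because $M_2$ is maximally $\mu$-monotone; the product modulus is $< 1$ exactly when $\alpha$ lies in the stated interval and $\mu \ge \beta$, again giving a Banach contraction. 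In all three cases the existence of a zero is assumed, so no surjectivity argument is needed; one only needs that the set of zeros coincides with $\operatorname{Fix} T$, which is immediate from the displayed equivalences.

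The main obstacle I anticipate is purely the bookkeeping of metrics and constants: the hypomonotone terms force one to work in a norm reweighted by a factor like $(1 \pm \alpha\omega)$ or to track a ``displacement'' correction, and reconciling the three slightly different parameter regimes into the single clean interval $\alpha \in (0, 2/(\beta + 2\mu))$ with the single side condition $\mu \ge \omega$ (resp. $\mu \ge \beta$) requires care to make sure the contraction estimates in the different cases all land in that same interval. Since the underlying convergence theorems are exactly those of \textcite[\S6]{Giselsson2021} and \autocite{Ryu2016}, the cleanest exposition is to cite those results and verify that the hypotheses of Proposition~\ref{prop:forward-backward} are instances of their standing assumptions, rather than re-deriving the fixed-point estimates from scratch.
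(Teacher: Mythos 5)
Your proposal matches the paper's treatment: the paper gives no standalone proof of Proposition~\ref{prop:forward-backward}, presenting it explicitly as a summary of the forward/backward convergence conditions of \textcite[\S 6]{Giselsson2021} (with the standard monotone case from \autocite{Ryu2016}), which is exactly the citation-based route you settle on in your final paragraph. Your contraction/averaged-operator sketch in the weighted norm is the standard argument underlying those cited results, so nothing essential is missing or divergent.
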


\subsubsection*{Douglas-Rachford splitting}

This most successful splitting algorithm 
forms the basis of the Alternating Direction Method of
Multipliers \cite{Boyd2010, Douglas1956, Lions1979}.

The reflected resolvent, or Cayley operator, is the operator
\begin{IEEEeqnarray*}{rCl}
        R_{\alpha S} \coloneqq 2\res_{\alpha S} - I.
\end{IEEEeqnarray*}
Given two operators $M_1$ and $M_2$, a scaling factor $\alpha$ and an initial value
$z^0$, 
the Douglas-Rachford algorithm is the iteration in $k$ given by
\begin{IEEEeqnarray*}{rCl}
z^{k + 1} &=& T(z^{k}),\\
x^k &=& \res_{\alpha M_2}{z^k},
\end{IEEEeqnarray*}
where $T$ is given by
\begin{IEEEeqnarray}{rCl}
T = \frac{1}{2}(I + R_{\alpha M_1} R_{\alpha M_2}).\label{eq:DR_operator}
\end{IEEEeqnarray}

\textcite[Thm 5.1]{Giselsson2021} give the most general conditions for convergence of
the Douglas-Rachford algorithm, again allowing one operator in the sum to be
hypomonotone.

\begin{proposition} \label{thm:DR_convergence}
        Let $M_1$ and $M_2$ be operators on a Hilbert space $\mathcal{H}$.  Let $\mu
        > \omega \geq 0$ and $\alpha \in (0, (\mu - \omega)/2\mu\omega)$.  The Douglas-Rachford
        algorithm converges to a zero of $M_1 + M_2$, if one exits, in each of the
        following cases.
        \begin{itemize}
        \item $M_1$ is maximally $(-\omega)$-monotone and $M_2$ is maximally
                $\mu$-monotone.
        \item $M_2$ is maximally $(-\omega)$-monotone and $M_1$ is maximally
                $\mu$-monotone.
\end{itemize}
\end{proposition}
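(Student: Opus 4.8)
The plan is to run the classical two-step Douglas--Rachford argument --- match the fixed points of the operator $T$ in \eqref{eq:DR_operator} with the zeros of $M_1+M_2$, then show that $T$ is an averaged operator so that $z^{k+1}=T(z^k)$ converges --- while carrying the moduli $\mu$ and $-\omega$ through every inequality. By symmetry it is enough to treat the first case ($M_1$ maximally $(-\omega)$-monotone, $M_2$ maximally $\mu$-monotone); the second is the same computation with two symbols swapped. First I would record that the hypothesis $\alpha<(\mu-\omega)/2\mu\omega$ forces $\alpha\omega<1/2$, so that $I+\alpha M_1=(1-\alpha\omega)I+(\alpha M_1+\alpha\omega I)$ is the sum of a positive multiple of the identity with a maximally monotone operator, hence is maximally $(1-\alpha\omega)$-strongly monotone; by Minty \autocite{Minty1961}, $\res_{\alpha M_1}$ is then single-valued, everywhere defined and $1/(1-\alpha\omega)$-Lipschitz, and likewise $\res_{\alpha M_2}$ is single-valued, everywhere defined and $1/(1+\alpha\mu)$-Lipschitz. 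In particular $T$ is single-valued and continuous.

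\textbf{Step 1 (fixed points $\leftrightarrow$ zeros).} Expanding \eqref{eq:DR_operator} via $R_{\alpha S}=2\res_{\alpha S}-I$ gives $T(z)=z-\res_{\alpha M_2}z+\res_{\alpha M_1}(2\res_{\alpha M_2}z-z)$. Hence $z^\star=T(z^\star)$ is equivalent to $x^\star:=\res_{\alpha M_2}z^\star$ satisfying $x^\star=\res_{\alpha M_1}(2x^\star-z^\star)$, i.e. to $z^\star-x^\star\in\alpha M_2(x^\star)$ together with $x^\star-z^\star\in\alpha M_1(x^\star)$; adding these two memberships yields $0\in(M_1+M_2)(x^\star)$. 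Conversely, given a zero $x^\star$ of $M_1+M_2$, pick $y^\star$ with $-y^\star\in M_1(x^\star)$ and $y^\star\in M_2(x^\star)$ and check that $z^\star:=x^\star+\alpha y^\star$ is a fixed point of $T$. Thus $\operatorname{Fix}T\neq\varnothing$ exactly when $M_1+M_2$ has a zero, and since the algorithm returns $x^k=\res_{\alpha M_2}z^k$, convergence of $z^k$ to some $z^\star\in\operatorname{Fix}T$ gives $x^k\to\res_{\alpha M_2}z^\star$, a zero of $M_1+M_2$, by continuity of the resolvent.

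\textbf{Step 2 (averagedness of $T$)} is the heart of the argument and the step I expect to be the main obstacle. For $z_1,z_2$ set $p_i=\res_{\alpha M_2}z_i$, $w_i=2p_i-z_i$, $q_i=\res_{\alpha M_1}w_i$, so that $T(z_i)=z_i-p_i+q_i$ and $(I-T)(z_i)=p_i-q_i$. Writing $\Delta z=z_1-z_2$ and likewise for $p,q,w$, and $P=\norm{\Delta p}^2$, $Q=\norm{\Delta q}^2$, a direct expansion gives
\[
\norm{\Delta z}^2-\norm{T(z_1)-T(z_2)}^2 = 2\ip{\Delta z}{\Delta p}-2\ip{\Delta z}{\Delta q}+2\ip{\Delta p}{\Delta q}-P-Q=:O.
\]
Now $\mu$-strong monotonicity of $M_2$ applied to $(z_i-p_i)/\alpha\in M_2(p_i)$ gives $\ip{\Delta z}{\Delta p}\geq(1+\alpha\mu)P$, while $(-\omega)$-monotonicity of $M_1$ applied to $(w_i-q_i)/\alpha\in M_1(q_i)$ gives $\ip{\Delta q}{\Delta w}\geq(1-\alpha\omega)Q$, which, using $\Delta w=2\Delta p-\Delta z$, rearranges to $\ip{\Delta z}{\Delta q}\leq2\ip{\Delta p}{\Delta q}-(1-\alpha\omega)Q$. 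Substituting both into $O$ and closing the leftover cross term with $\ip{\Delta p}{\Delta q}\leq\sqrt{PQ}$ leaves
\[
O \geq (1+2\alpha\mu)P-2\sqrt{PQ}+(1-2\alpha\omega)Q,
\]
a quadratic form in $(\sqrt P,\sqrt Q)$ that is positive definite precisely when $(1+2\alpha\mu)(1-2\alpha\omega)>1$, i.e. $\mu-\omega>2\alpha\mu\omega$, i.e. $\alpha<(\mu-\omega)/2\mu\omega$ --- exactly the stated bound (with $\mu>\omega$ being what makes this interval nonempty, and $\alpha<(\mu-\omega)/2\mu\omega$ already forcing $1-2\alpha\omega>0$). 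Therefore $O\geq\varepsilon(P+Q)$ for some $\varepsilon>0$; since $\norm{(I-T)(z_1)-(I-T)(z_2)}^2=\norm{\Delta p-\Delta q}^2\leq2(P+Q)$, this reads $\norm{T(z_1)-T(z_2)}^2\leq\norm{\Delta z}^2-\tfrac{\varepsilon}{2}\norm{(I-T)(z_1)-(I-T)(z_2)}^2$, i.e. $T$ is $\theta$-averaged with $\theta=(1+\varepsilon/2)^{-1}<1$. Convergence of $z^{k+1}=T(z^k)$ to a point of $\operatorname{Fix}T$, when the latter is nonempty, then follows from the standard convergence theory of averaged operators \autocite{Ryu2022}, which together with Step~1 proves the claim. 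The second case is identical after interchanging the roles of $P$ and $Q$ (equivalently of $M_1$ and $M_2$), which leaves the quadratic-form threshold unchanged.

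The delicacy is entirely in Step~2: the strong- and hypo-monotonicity inequalities must be inserted exactly where they are above and the remaining cross term closed by Cauchy--Schwarz, rather than, say, by bounding $\norm{\Delta q}$ through the Lipschitz constant of $\res_{\alpha M_1}$ and discarding the negative $P$-terms --- that cruder route only yields $\norm{T(z_1)-T(z_2)}\leq(1-\alpha\omega)^{-1}\norm{\Delta z}$, which is vacuous. Making the determinant condition $(1+2\alpha\mu)(1-2\alpha\omega)>1$ emerge as the sharp threshold is precisely what pins down the admissible interval for $\alpha$.
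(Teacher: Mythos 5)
Your proof is correct, and it necessarily differs from the paper, because the paper offers no proof of this proposition: it is stated as a quotation of \autocite[Thm 5.1]{Giselsson2021}. What you supply is a self-contained derivation in the classical Lions--Mercier style: the fixed-point/zero correspondence for the operator $T$ of \eqref{eq:DR_operator}, followed by an averagedness estimate in which the moduli are carried exactly. The algebra checks out: inserting $\mu$-strong monotonicity at $(z_i-p_i)/\alpha\in M_2(p_i)$ and $(-\omega)$-monotonicity at $(w_i-q_i)/\alpha\in M_1(q_i)$ does give $O\geq(1+2\alpha\mu)P-2\sqrt{PQ}+(1-2\alpha\omega)Q$; the positive-definiteness condition $(1+2\alpha\mu)(1-2\alpha\omega)>1$ is equivalent to $\alpha<(\mu-\omega)/2\mu\omega$; the hypothesis indeed forces $\alpha\omega<1/2$, so both resolvents are single-valued with full domain (Minty applied to the shifted maximal monotone operator $M_1+\omega I$); and the second bullet is the same computation with the roles of $P$ and $Q$ exchanged. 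What the citation buys the paper is brevity and the generality of Giselsson's framework (which also underlies Proposition~\ref{prop:forward-backward}); what your argument buys is transparency about where the step-size interval comes from, and in fact a slightly stronger conclusion: your inequality applied with $z_2=z^\star$ a fixed point gives $\norm{z^{k+1}-z^\star}^2\leq\norm{z^k-z^\star}^2-\varepsilon\norm{x^k-x^\star}^2$, hence $\sum_k\norm{x^k-x^\star}^2<\infty$ and strong convergence of the shadow sequence $x^k=\res_{\alpha M_2}z^k$ to the zero, which is unique by $(\mu-\omega)$-strong monotonicity of the sum. That observation is also how to tighten the one loose sentence in your write-up: in infinite-dimensional $\mathcal{H}$, Krasnosel'skii--Mann yields only weak convergence of $z^k$, and resolvents are not weakly continuous, so ``$x^k\to\res_{\alpha M_2}z^\star$ by continuity of the resolvent'' is not literally justified as stated; the Fej\'er-plus-summability argument just sketched repairs it with no new ideas, and the issue is moot in the finite-dimensional spaces $l_{2,T}$ where the paper actually runs the algorithm.
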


\subsection{A nested splitting algorithm for three element 
circuits}\label{sec:three}

If $M$ is composed of three elements, with one series interconnection and one
parallel interconnection (see Figure~\ref{fig:three-circuit}), $M$ has the form
$M = M_3 + (M_2 + M_1)^{-1}$, and we can convert our problem to the form $0 \in (M_3 + (M_2 +
M_1)^{-1})(u)$ again by offsetting by the input current or voltage. One approach to
solving this problem is to using a splitting algorithm such as the forward/backward algorithm,
with the resolvent step applied for $M_3$ and the forward step applied for $(M_2 +
M_1)^{-1}$.  Applying this forward step amounts to solving $y = (M_2 + M_1)^{-1}(u)$
for some $u$, which may be rewritten as $0 \in (M_2 + M_1)(y) - u$.  This can be
solved by again applying the forward/backward algorithm.

\begin{figure}[hb]
        \centering
        \includegraphics{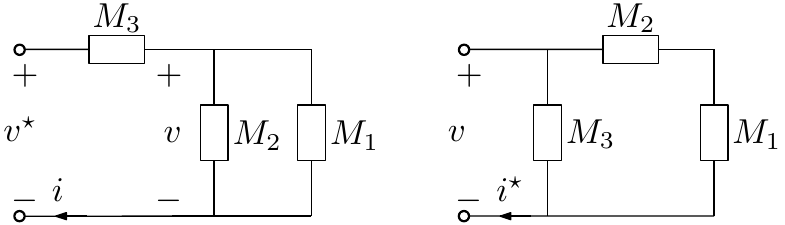}
        \caption{The two possible configurations of three elements with one series
        interconnection and one parallel interconnection.}%
        \label{fig:three-circuit}
\end{figure}

This procedure has poor complexity: for every forward/backward
step for $M_3 + (M_2 + M_1)^{-1}$, an \emph{entire} fixed point iteration has to
be computed for (an offset version of) $M_2 + M_1$.  In this section, we propose an
alternative procedure.  Rather than applying a forward step for the relation $(M_2 +
M_1)^{-1}$, we simply apply a \emph{single step} of the fixed point iteration needed
to compute this forward step, using the forward/backward algorithm.  Assume, without
loss of generality, that $d(M_3) = i \to v$ and $d(M_2) = d(M_1) = v \to i$ (the
configuration shown on the left of Figure~\ref{fig:three-circuit}).  Suppose that
$v^\star \in (M_3 + (M_2 + M_1)^{-1})(i)$.  Assume that $M_1$, $\res_{\alpha_1 M_2}$ 
and $\res_{\alpha_2 M_3}$ are single-valued.  We then have:
\begin{IEEEeqnarray}{rCl}
        v^\star &\in& v + M_3(i)\label{eq:i_step}\\
        v &\in& (M_2 + M_1)^{-1}(i),\label{eq:v_step} 
\end{IEEEeqnarray}
where $v$ is the voltage over $M_2$, illustrated on the left of
Figure~\ref{fig:three-circuit}.  Equation~\eqref{eq:i_step} gives
\begin{IEEEeqnarray*}{rCl}
        i + \alpha_2 M_3(i) &\ni& i - \alpha_2 v + \alpha_2 v^\star\\
        i &=& (I + \alpha_2 M_3)^{-1}(i - \alpha_2 v + \alpha_2 v^\star)\\
        i &=& \res_{\alpha_2 M_3}(i - \alpha_2 v + \alpha_2 v^\star).
\end{IEEEeqnarray*}
Equation~\eqref{eq:v_step} gives
\begin{IEEEeqnarray*}{rCl}
        i &\in& (M_2 + M_1)(v)\\
        v + \alpha_1 M_2(v) &\ni& v - \alpha_1 M_1 (v) + \alpha_1 i\\
        v &=& (I + \alpha_1 M_2)^{-1}(v - \alpha_1 M_1(v) + \alpha_1 i)\\
        v &=& \res_{\alpha_1 M_2}(v - \alpha_1 M_1(v) + \alpha_1 i).
\end{IEEEeqnarray*}
This shows that a fixed point of the iteration
\begin{IEEEeqnarray}{rCl}\IEEEyesnumber\label{eq:three_iteration}\IEEEyessubnumber
        v^{k + 1} &=& \res_{\alpha_1 M_2} (v^k - \alpha_1 M_1(v^k) + \alpha_1 i^k)\\
        i^{k + 1} &=& \res_{\alpha_2 M_3} (i^k - \alpha_2 v^{k+1} + \alpha_2
        v^\star)\IEEEyessubnumber\label{eq:three_second}
\end{IEEEeqnarray}
is a solution to our original problem.
In the next section, we generalize this algorithm to an arbitrary series/parallel
monotone one-port, and in Theorem~\ref{thm:nested_convergence}, we give a general
condition under which the algorithm is guaranteed to converge to such a fixed point.
The three element circuits of this section are revisited in
Example~\ref{ex:three_convergence}.

\subsection{A nested splitting algorithm for arbitrary series/parallel
circuits}\label{sec:nested}

In this section, we introduce a new splitting algorithm, the \emph{nested
forward/backward algorithm}, which generalizes the algorithm described in the
previous section to monotone one-ports with
arbitrary series and parallel interconnections, which have the general form shown in
Figure~\ref{fig:nested_algo} (allowing elements to be open circuits, short circuits,
or whole subcircuits).  We assume for simplicity that the relations $G_j$ and $R_j$
are single-valued, although the extension to multi-valued relations
is straightforward.

\begin{figure}[hb]
        \centering
        \includegraphics[width=\linewidth]{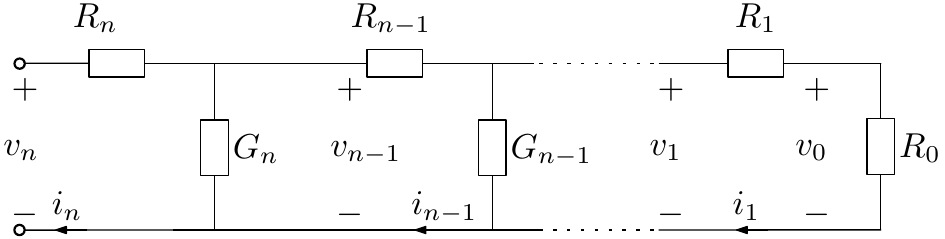}
        \caption{Circuit structure with nested series and parallel interconnections.
        $R_n$ represents a one-port whose $i-v$ relation is known, $G_n$ represents a
one-port whose $v-i$ relation is known.}%
\label{fig:nested_algo}
\end{figure}

The $v-i$ relation of the circuit in Figure~\ref{fig:nested_algo} is given by
\begin{IEEEeqnarray}{rCl}
i_n &=& (R_n {+} (G_n {+} (\ldots {+} (R_1 {+} R_0)^{-1}\ldots
)^{-1})^{-1})^{-1}(v_n).\label{eq:nested_relation}
\end{IEEEeqnarray}
If each inversion is solved using a fixed point iteration, the number of fixed points 
that must be computed scales with order $\mathcal{O}(m^n)$, where $n$ is the number
of inverses in Equation~\eqref{eq:nested_relation}, and $m$ is the number of steps
needed to compute each inverse.  Following the argument of the
previous section, the nested forward/backward
algorithm, given in Algorithm~\ref{alg:nested}, solves equations of the form \eqref{eq:nested_relation} by
replacing inverse operators with a single step of the forward/backward iteration
needed to compute them.  In this way, every inversion is computed simultaneously,
using a single fixed point algorithm.  

\algdef{SE}[DOWHILE]{Do}{DoWhile}{\algorithmicdo}[1]{\algorithmicwhile\ #1}%

\begin{algorithm}
        \caption{Nested Forward/Backward Algorithm}\label{alg:nested}
        \begin{algorithmic}[1]
                \State \textbf{Data:} Series/parallel one-port with $2n$ elements,
                step sizes $\alpha_j > 0$, $j = 1, \ldots, 2n-1$, external signal
                $v_n$, convergence tolerance $\epsilon > 0$.
                \For{$j = 1, \ldots, n$}
                        \State Initialize $v^1_{j-1}$, $i^1_j$.
                \EndFor
                \State $k = 1$
                \Do
                        \State $i_1^{k+1} = \res_{\alpha_1 R_1}(i_1^k - \alpha_1 R_0(i_1^k) +
                        \alpha_1 v_1^k)$
                        \label{alg:i_1_update}
                        \For{$j = 2, \ldots, n$}
                        \State $v_{j-1}^{k + 1} = \res_{\alpha_{2j-2} G_{j}}
                                (v_{j-1}^k - \alpha_{2j-2}i_{j-1}^{k+1} + \alpha_{2j-2}i_j^k)$
                                \label{alg:v_update}
                        \State $i_{j}^{k + 1} = \res_{\alpha_{2j-1} R_{j}}
                                (i_{j}^k - \alpha_{2j-1}v_{j-1}^{k+1} + \alpha_{2j-1}v_j^k)$
                                \label{alg:i_update}
                        \EndFor
                        \State $k = k+1$.
                        \DoWhile{$\max_j (|v_j^{k + 1} - v^k_j|, |i_j^{k + 1} -
                        i^k_j|) > \epsilon$}
        \end{algorithmic}
\end{algorithm}

\begin{theorem}\label{thm:nested_convergence}
        Algorithm~\ref{alg:nested} converges to a solution of
        Equation~\ref{eq:nested_relation} as $k \to \infty$ if $R_0$ is coercive and Lipschitz, all the
         $R_j, G_j$ are monotone for $j = 1, \ldots, n$, and the eigenvalues of
         $\mathcal{A}$ all lie within the unit circle, where $\mathcal{A}$ is defined
         as the $(2n-1)\times(2n-1)$ matrix with columns
         \begin{IEEEeqnarray*}{l}\label{eq:nested_matrix}
        \begin{pmatrix}
\beta_1 \gamma_1 \\                                                   
\alpha_2 \beta_1 \gamma_2 \gamma_1  \\                               
\alpha_3 \alpha_2 \beta_1\gamma_3 \gamma_2 \gamma_1 \\                 
\alpha_4 \alpha_3 \alpha_2 \beta_1 \gamma_4\gamma_3\gamma_2\gamma_1\\ \vdots
        \end{pmatrix},\;
        \begin{pmatrix}
\gamma_1 \alpha_1\\                                                       
\gamma_2(1  +\gamma_1 \alpha_1 \alpha_2) \\                               
\alpha_3\gamma_2\gamma_3(1 +\gamma_1\alpha_1 \alpha_2) \\                
\alpha_4 \alpha_3 \gamma_4\gamma_3\gamma_2(1 + \gamma_1\alpha_1\alpha_2)\\ \vdots
\end{pmatrix},\\
\begin{pmatrix}
0 \\                                                       
\gamma_2 \alpha_2 \\                                       
\gamma_3(1 + \gamma_2 \alpha_2 \alpha_3) \\                
\alpha_4\gamma_4 \gamma_3(1 + \gamma_2\alpha_2 \alpha_3)\\ \vdots
\end{pmatrix},\;
                \begin{pmatrix}
 0\\ 
 0\\ 
 \gamma_3 \alpha_3 \\
 \gamma_4(1 + \gamma_3\alpha_3 \alpha_4)\\ \vdots
        \end{pmatrix},
        \end{IEEEeqnarray*}
         and so on, where, for $j=1, \ldots, n$, $\gamma_{2j-2}$ is a Lipschitz constant of
        $\res_{\alpha_{2j-2} R_j}$,  $\gamma_{2j-1}$ is a Lipschitz constant of
        $\res_{\alpha_{2j-1} G_j}$ and $\beta_1$ is a Lipschitz constant of the
        operator $(I - \alpha_1 R_0)$.
\end{theorem}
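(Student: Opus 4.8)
The plan is to argue in two stages: first, that the set of fixed points of the sweep defined by one pass of Algorithm~\ref{alg:nested} coincides with the solution set of Equation~\eqref{eq:nested_relation}; second, that this sweep is a contraction in a suitable norm, so that the Banach fixed point theorem delivers a unique fixed point and linear convergence of the iterates to it. Existence of a solution is then obtained for free from the contraction, rather than assumed.

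For the first stage I would generalise the calculation of Section~\ref{sec:three}. At a fixed point the superscripts $k$ and $k+1$ coincide, and the identity $x = \res_{\alpha S}(y) \iff y \in x + \alpha S(x)$ turns line~\ref{alg:i_1_update} into $v_1 \in (R_0 + R_1)(i_1)$, i.e.\ $i_1 \in (R_0+R_1)^{-1}(v_1)$; line~\ref{alg:v_update} into $i_j - i_{j-1} \in G_j(v_{j-1})$; and line~\ref{alg:i_update} into $v_j - v_{j-1} \in R_j(i_j)$. Substituting these successively, together with the relational definitions of $S+R$ and $S^{-1}$, reconstructs from the inside out exactly the nested relation~\eqref{eq:nested_relation}, with $v_n$ as the external voltage and the intermediate variables $v_1,\dots,v_{n-1},i_1,\dots,i_n$ determined by any given solution. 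Hence fixed points and solutions are in bijection; this step is routine.

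For the second stage, fix any solution with its intermediate variables — equivalently, any fixed point — write $\delta v_j^k := v_j^k - v_j^\star$ and $\delta i_j^k := i_j^k - i_j^\star$, and note $\delta v_n^k = 0$ since $v_n$ is the fixed input. Because each $R_j,G_j$ is monotone (and single-valued), each resolvent is single-valued and nonexpansive, so $\norm{\delta i_j^{k+1}} \le \gamma_{2j-1}\big(\norm{\delta i_j^k} + \alpha_{2j-1}\norm{\delta v_{j-1}^{k+1}} + \alpha_{2j-1}\norm{\delta v_j^k}\big)$ and analogously for $\delta v_{j-1}^{k+1}$, while the $i_1$-update replaces $\gamma\norm{\delta i_1^k}$ by $\beta_1\gamma_1\norm{\delta i_1^k}$ with $\beta_1 = \operatorname{Lip}(I-\alpha_1 R_0)$. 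Since the sweep is of Gauss--Seidel type — each line uses the freshly updated $i_{j-1}^{k+1}$ and $v_{j-1}^{k+1}$ — I would substitute the already-derived bounds for these ``fresh'' quantities into the later ones. Collecting the result into the error vector $e^k = (\norm{\delta i_1^k},\norm{\delta v_1^k},\norm{\delta i_2^k},\dots,\norm{\delta i_n^k})^\top \in \R^{2n-1}$ yields the entrywise inequality $e^{k+1} \le \mathcal{A}\,e^k$, where $\mathcal{A}\ge 0$ is the lower-Hessenberg matrix whose columns are listed in the statement. Crucially, this inequality holds for the outputs of one sweep applied to \emph{any} two inputs, not merely along a single trajectory, because it uses only the Lipschitz/nonexpansiveness constants.

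It then remains to convert the spectral-radius hypothesis into a contraction. Since $\mathcal{A}$ is entrywise nonnegative with all eigenvalues in the open unit disk, $\rho(\mathcal{A})<1$, so $(I-\mathcal{A})^{-1}=\sum_{k\ge 0}\mathcal{A}^k \ge 0$ exists and $w := (I-\mathcal{A})^{-1}z$ is strictly positive for any strictly positive $z$, with $\mathcal{A}w = w - z < w$ entrywise. The weighted max-norm $\norm{x}_w := \max_\ell |x_\ell|/w_\ell$ on $\R^{2n-1}$ is monotone with respect to the entrywise order on the nonnegative orthant and makes $\mathcal{A}$ a strict contraction. Defining, for a tuple $x$ of signals, $\nu(x) := (\norm{x_1},\dots,\norm{x_{2n-1}})$ and $\norm{x}_\star := \norm{\nu(x)}_w$, one checks $\norm{\cdot}_\star$ is a norm on the product signal space equivalent to the usual one, and the entrywise bound above gives $\norm{\Phi(x)-\Phi(y)}_\star \le c\,\norm{x-y}_\star$ with $c = \max_\ell (\mathcal{A}w)_\ell/w_\ell < 1$, where $\Phi$ is the sweep map. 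The Banach fixed point theorem then provides a unique fixed point and linear convergence of the iterates to it; combined with the first stage, the iterates converge to a solution of~\eqref{eq:nested_relation}. (Coercivity of $R_0$, together with its Lipschitz bound, is what allows $\alpha_1$ to be chosen so that $\beta_1 < 1$, making the spectral condition attainable.) The one genuinely delicate part is the identification of $\mathcal{A}$ in the third paragraph: one must track how the Gauss--Seidel substitutions propagate $\delta i_1^k$ and $\delta v_1^k$ through every downstream update, which is precisely what produces the products $\gamma_{2j-1}\cdots\gamma_1\beta_1$ and the $(1+\gamma_{\ell-1}\alpha_{\ell-1}\alpha_\ell)$ corrections in the listed columns; the rest is standard nonnegative-matrix and fixed-point theory.
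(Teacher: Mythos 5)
Your proposal is correct and follows essentially the same route as the paper's proof: identify fixed points of the sweep with solutions of Equation~\eqref{eq:nested_relation}, bound the propagation of increments through the Gauss--Seidel sweep to obtain the entrywise inequality $e^{k+1}\leq \mathcal{A}e^k$ with the stated matrix, and convert $\rho(\mathcal{A})<1$ into a contraction so that the Banach fixed point theorem gives existence of and convergence to the fixed point. Your explicit weighted max-norm built from $(I-\mathcal{A})^{-1}$ is just a concrete instance of the norm $\norm{\cdot}_P$ the paper invokes from linear systems theory, and if anything it makes the final appeal to Banach more airtight than the paper's terse closing step.
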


To clarify, the constants $\alpha_j$ may be chosen to tune the convergence rate of
the algorithm.  The constants $\beta_1$ and $\gamma_j$ must be Lipschitz constants
for the relevant operators. Coercivity and Lipschitz
continuity of $R_0$, say with constants $\mu$ and $\lambda$ respectively, means that
$\alpha_1$ can always be chosen so that $0 < \beta_1 < 1$ by the formula $\beta_1 = 1
- 2\mu\alpha_1 + \lambda^2 \alpha^2$ \autocite[p. 39]{Ryu2021a}. 
Monotonicity of $R_j$ and $G_j$ for all $j$ implies that
all resolvents used in the algorithm are nonexpansive, so the $\gamma_j$ are at most
equal to $1$.  If, furthermore, an element $R_j$ or $G_j$ is  $\mu_j$-coercive, we have
$\gamma_j = 1/(1 + \alpha_j\mu_j)$.  In the limiting case $\alpha_j = 0$ for all $j$,
the matrix $\mathcal{A}$ is the identity, and has all its eigenvalues on the boundary
of the unit disc.  Before giving the proof of
Theorem~\ref{thm:nested_convergence}, we revisit the three element circuits of
Section~\ref{sec:three}.

\begin{example}\label{ex:three_convergence}
        Consider the circuit shown in the left of Figure~\ref{fig:three-circuit}, consisting
        of three elements, $M_1$, $M_2$ and $M_3$.    Section~\ref{sec:three} describes
        a special case of Algorithm~\ref{alg:nested} for this circuit.  In this
        example, we apply Theorem~\ref{thm:nested_convergence} to this circuit, and
        give a convergence condition for the fixed point iteration given by
        Equation~\eqref{eq:three_iteration}.

        For this circuit, the matrix $\mathcal{A}$ is given by
        \begin{IEEEeqnarray*}{rCl}
                \begin{pmatrix}
                        \gamma_1\beta_1 & \gamma_1 \alpha_1 \\
                        \alpha_2\beta_1\gamma_1\gamma_2 & \gamma_2 +
                        \alpha_1\alpha_2\gamma_1\gamma_2
                \end{pmatrix},
        \end{IEEEeqnarray*}
        where $\beta_1$ is a Lipschitz constant for the operator $(I - \alpha_1
        M_1)$, $\gamma_1$ is a Lipschitz constant for the operator
        $\res_{\alpha_1M_2}$ and $\gamma_2$ is a Lipschitz constant for the operator
        $\res_{\alpha_2 M_3}$.

        Suppose $M_1$ is
        $\mu_1$-coercive and $\lambda_1$-Lipschitz, and $M_2$ and $M_3$ are $\mu_2$-
        and $\mu_3$-coercive, respectively.  For simplicity, set $\alpha_1 = \alpha_2
        = \alpha$. Then we can write
        \begin{IEEEeqnarray*}{rCl}
                \beta_1 &=& 1 - 2\mu_1\alpha + \lambda_1^2 \alpha^2\\
                \gamma_1 &=& 1/(1 + \alpha \mu_2)\\
                \gamma_2 &=& 1/(1 + \alpha \mu_3).
        \end{IEEEeqnarray*}
        Figure~\ref{fig:three_convergence} shows the magnitude of the largest
        eigenvalue of $\mathcal{A}$ as a function of $\alpha$, for $\mu_1 = \mu_2
        = \mu_3 = 1$ and $\lambda_1 = 2$.  It can be seen that, for $\alpha \in (0,
        0.5)$,
        the eigenvalues of $\mathcal{A}$ lie within the unit disc, and the fixed
        point iteration is a contraction mapping and is guaranteed to converge.  The
        best guaranteed convergence rate is obtained with $\alpha \approx 0.27$.
\end{example}

        \begin{figure}[hb]
                \centering
                \begin{tikzpicture}
                        \begin{axis}
                                [
                                no markers,
                                name = ax1,
                                width=0.35\textwidth,
                                height=0.3\textwidth,
                                ticklabel style={/pgf/number format/fixed},
                                ylabel={$|\lambda_{\max}( \mathcal{A})|$},
                                xlabel={$\alpha$},
                                cycle list name=colors,
                                grid=both,
                                grid style={line width=.1pt, draw=Gray!20},
                                axis x line=bottom,
                                axis y line=left
                                ]
                                \addplot table [x=a, y=l, col sep = comma, mark =
                                        none]{"./three_convergence.csv"};
                       \end{axis}
                \end{tikzpicture}
                \caption{The magnitude of the largest eigenvalue of $\mathcal{A}$ as
                a function of step size $\alpha$, for the fixed point iteration of
                Example~\ref{ex:three_convergence}. The curve lies below $1$ for $\alpha \in (0,
        0.5)$, with a minimum at $\alpha \approx 0.27$.  The curve is not symmetric.}%
                \label{fig:three_convergence}
        \end{figure}
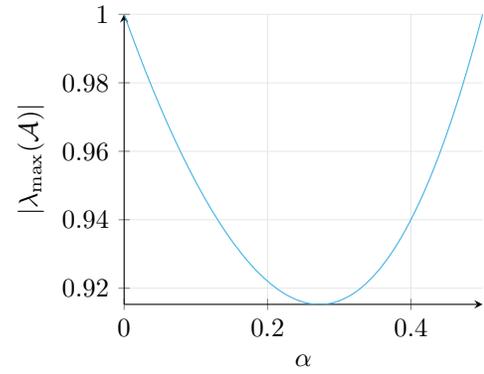

\begin{proof}[Proof of Theorem~\ref{thm:nested_convergence}]
        We begin by showing that a fixed point of the iteration in
        Algorithm~\ref{alg:nested} is a solution to
        Equation~\ref{eq:nested_relation}.
        Indeed, substituting $v_j^{k+1} = v_j^k$ and $i_j^{k+1} = i_j^k$ into
        lines~\ref{alg:i_1_update}, \ref{alg:v_update} and~\ref{alg:i_update} of
        Algorithm~\ref{alg:nested} gives
        \begin{IEEEeqnarray*}{rCl}
        v_1 &=& R_1(i_1) + R_0(i_1)\\
        i_j &=& G_j(v_{j-1}) + i_{j-1}\\
        v_j &=& R_j(i_j) + v_{j-1},
        \end{IEEEeqnarray*}
        from which we obtain
        \begin{IEEEeqnarray*}{rCl}
        i_1 &=& (R_1 + R_0)^{-1}(v_1)\\
        i_2 &=& (R_2 + (G_2 + (R_1 + R_0)^{-1})^{-1})^{-1} (v_2),
        \end{IEEEeqnarray*}
        and so on, to arrive at Equation~\ref{eq:nested_relation}, as required.

        We now show that Algorithm~\ref{alg:nested} converges to a fixed point under
        the stated conditions.  We simplify notation by defining $u_j = i_j$, $j$
        odd, and $u_j = v_j$, $j$ even.

        Let $u^{k}$ and $w^{k}$ be two sequences
        of iterates generated by Algorithm~\ref{alg:nested}, with the same input
        $u_n^k = w_n^k = v^\star$, and denote $u_j^k -
        w_j^k$ by $\Delta
        u_j^{k}$. It then follows from lines~\ref{alg:i_1_update}, \ref{alg:v_update} and~\ref{alg:i_update} of
        Algorithm~\ref{alg:nested} that, for $j = 1, \ldots, 2n-1$,
        \begin{IEEEeqnarray*}{rCl}
                \norm{\Delta u_1^{k+1}} \leq \gamma_1 \norm{\Delta u_1^k -\alpha_1
                        \Delta R_0(u_1^k) + \alpha_1 \Delta u_2^k}\\
                \norm{\Delta u_j^{k+1}} \leq \gamma_j \norm{\Delta u_j^k - \alpha_j
                \Delta u_{j-1}^{k+1} + \alpha_j \Delta u_{j+1}^k},
        \end{IEEEeqnarray*}
        from which it follows, via the triangle inequality, that
        \begin{IEEEeqnarray*}{rCl}
                \norm{\Delta u_1^{k+1}} \leq \gamma_1 \beta_1 \norm{\Delta u_1^k} +
                \gamma_1 \alpha_1 \norm{\Delta u_2^k}\\
                \norm{\Delta u_j^{k+1}} \leq \gamma_j \norm{\Delta u_j^k} + \gamma_j \alpha_j
                \norm{\Delta u_{j-1}^{k+1}} + \gamma_j \alpha_j \norm{\Delta u_{j+1}^k},
        \end{IEEEeqnarray*}
        where $\Delta u_n^k = 0$ for all $k$.  Let $n(\Delta u^k)$ denote the vector
        $(\norm{\Delta u_1^{k}}, \norm{\Delta u_2^{k}}, \norm{\Delta u_3^{k}},
        \norm{\Delta u_4^{k}}, \ldots)\tran$. It follows that
        \begin{IEEEeqnarray*}{rCl}
        n(\Delta u^{k+1}) &\leq& 
        \mathcal{A} n(\Delta u^k)\\
                          &\leq& \mathcal{A}^{k} n(\Delta u^1),
        \end{IEEEeqnarray*}
        where $\mathcal{A}$ is the matrix given in the statement of the theorem.  It
        follows from the nonnegativity of $n(\Delta u^{k+1})$ (or from the
        elementwise nonnegativity of $\mathcal{A}$) that $\mathcal{A}^{k}
        n(\Delta u^1)$ is elementwise nonnegative for all $k$.
        We then have $0 \leq n(\Delta u^{k+1}) \leq z^{k+1}$, where $z^{k+1}$ is the
        solution to the difference equation $z^{k+1} = \mathcal{A}z^k$ with initial
        condition $n(\Delta u^1)$.  Since the eigenvalues of $\mathcal{A}$ are within
        the unit circle, it is a standard result of linear systems theory
        that there exist a norm $\norm{\cdot}_P$ and rate $0 < \lambda < 1$  such that 
        $\norm{z^{k+1}}_P \leq \lambda \norm{z^k}_P$.  It follows that the sequence $n(\Delta u^k)$ converges to the zero vector 
        in the norm $\norm{\cdot}_P$ at least as fast as the sequence $z^k$.
        It then follows from the Banach fixed point theorem that each $u_j^k$ converges to a limit $u^\star_j$ as $k\to \infty$, which completes the proof.
\end{proof}

        Theorem~\ref{thm:nested_convergence} demonstrates the validity of
        Algorithm~\ref{alg:nested}, in that it proves the existence of circuits for
        which the algorithm is guaranteed to converge. In simple
        cases, like that of Example~\ref{ex:three_convergence}, the condition can
        furthermore be used as a
        design tool.  Theorem~\ref{thm:nested_convergence} raises several
        questions for future research.  The first is whether a general solution
        exists for the step sizes $\alpha_j$ which minimizes the spectral radius of
        $\mathcal{A}$.  Furthermore, Theorem~\ref{thm:nested_convergence} is
        conservative, relying on a small gain argument and proving contraction, a
        strong convergence property.  An interesting question is whether there exist
        less conservative convergence conditions.

\section{RLC circuits}
\label{sec:RLC}

Here, we consider one-port circuits formed by the series and parallel interconnection
of resistors, capacitors and inductors.  This is the class of circuits considered in 
the conference version of this paper \autocite{Chaffey2021}.

A resistor is a relation $R$ on $\R$, the \emph{device law}, between current and voltage:
\begin{IEEEeqnarray*}{rrCl}
        & R = \left\{(i, v) \in \R \times \R\; |\; v \in R(i)\right\}.
\end{IEEEeqnarray*}
A resistor defines a 1-port relation on $L_{2, T}$ by applying
the relation $R$ at each time:

\begin{IEEEeqnarray*}{rCl}
\mathcal{R} = \left\{(i, v) \in L_{2, T} \times L_{2, T} \; |\; (i(t), v(t)) \in R \text{ for all } t \right\}.
\end{IEEEeqnarray*}

Given $i, v \in L_{2, T}$, define the charge $q(t) = \int_0^t i(\tau) \dd \tau$ and the magnetic flux linkage
$\phi(t) = \int_0^t v(\tau) \dd \tau$.
A capacitor is a relation $\mathcal{C}$ on $L_{2, T}$ between  voltage and current, defined by a device law $C(\cdot):\R \to \R$ which maps voltage to charge.
We assume that $C$ is single-valued and differentiable.
\begin{IEEEeqnarray*}{rrCl}
        & \mathcal{C} = \left\{(v, i) \in L_{2, T} \times L_{2, T}\; |\;q = C(v),\;
        \td{}{t} q = i \right\}.\\
\end{IEEEeqnarray*}
An inductor is given by a relation $\mathcal{L}$ on $L_{2, T}$ between the current
and voltage, defined by a device law $L(\cdot):\R\to \R$ which maps current to
magnetic flux linkage.  Again we assume $L$ to be single-valued and differentiable.
\begin{IEEEeqnarray*}{rrCl}
        & \mathcal{L} = \left\{(i, v) \in L_{2, T} \times L_{2, T}\; |\; \phi =
        L(i),\; \td{}{t} \phi = v \right\}.\\
\end{IEEEeqnarray*}

The following proposition shows that resistors map $T$-periodic inputs
to $T$-periodic outputs, capacitors map $T$-periodic voltages to $T$-periodic
currents, and inductors map $T$-periodic currents to $T$-periodic voltages.

\begin{proposition}\label{thm:periodic_elements}
        Single-valued memoryless relations and the derivative map $T$-periodic inputs to
        $T$-periodic outputs.
\end{proposition}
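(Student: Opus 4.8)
The plan is to split the statement into its two halves and dispatch each directly from the definition of $T$-periodicity, namely that $w(t)=w(t+T)$ for all $t$.

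First I would handle the memoryless case. Suppose $S(\cdot):\R\to\R$ is a single-valued memoryless map, inducing the $1$-port relation $\mathcal{S}=\{(u,y)\in L_{2,T}\times L_{2,T}\mid y(t)=S(u(t))\text{ for all }t\}$, and let $u\in L_{2,T}$ be $T$-periodic, with output $y(t)=S(u(t))$. Then for every $t$,
\begin{IEEEeqnarray*}{rCl}
y(t+T) &=& S(u(t+T)) \;=\; S(u(t)) \;=\; y(t),
\end{IEEEeqnarray*}
so $y$ is $T$-periodic. The only subtlety is a measurability/integrability remark: since $u\in L_{2,T}$ and $S$ is (assumed) sufficiently regular — in the RLC setting the device laws $C,L$ are differentiable, hence continuous, and resistor device laws are taken to be graphs of monotone relations, here single-valued — the composition $S\circ u$ is again in $L_{2,T}$, so the statement is not vacuous. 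I would state this only in passing.

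Next I would handle the derivative. Here the relation is $\{(w, \dot w)\}$ (as used for the capacitor via $q=C(v)$, $i=\dot q$, and the inductor via $\phi=L(i)$, $v=\dot\phi$). Let $w$ be $T$-periodic and differentiable. Since $w(t)=w(t+T)$ for all $t$, differentiating both sides with respect to $t$ (the shift $t\mapsto t+T$ has unit derivative) gives $\dot w(t)=\dot w(t+T)$ for all $t$, so $\dot w$ is $T$-periodic. One should note that $\dot w$ should be interpreted as a weak (distributional) derivative on the circle $\R/T\Z$ so that it lies in $L_{2,T}$; periodicity of $w$ ensures there is no boundary term and the derivative is well defined as an operator $L_{2,T}\rightrightarrows L_{2,T}$.

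The combination then gives the stated result: resistors (memoryless, single-valued) carry $T$-periodic currents to $T$-periodic voltages and vice versa; capacitors, being the composition $v\mapsto C(v)\mapsto \tfrac{d}{dt}C(v)$ of a memoryless map followed by the derivative, carry $T$-periodic voltages to $T$-periodic currents; and inductors, symmetrically, carry $T$-periodic currents to $T$-periodic voltages. There is no real obstacle here — the main (minor) point to get right is the bookkeeping that keeps every operator acting on $L_{2,T}$, i.e.\ reading the derivative as the periodic/weak derivative so that compositions stay within the Hilbert space; the algebra itself is a one-line substitution in each case.
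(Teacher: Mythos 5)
Your proposal is correct and follows essentially the same route as the paper: the memoryless case is the identical substitution $y(t+T)=S(u(t+T))=S(u(t))=y(t)$, and your differentiation of the identity $w(t)=w(t+T)$ is just a repackaging of the paper's difference-quotient argument for the derivative. The extra remarks on integrability and the periodic/weak reading of the derivative are fine but not needed beyond what the paper already assumes.
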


\begin{proof}
        Let $f$ be a single-valued memoryless relation, that is, a relation
        between $u$ and $y$ such that $y(t) = f(u(t))$.  Then $y(t + T) = f(u(t
        + T)) = f(u(t)) = y(t)$. 

        The property also holds for the derivative: 
        \begin{IEEEeqnarray*}{+rCl+x*}
                \td{u(t)}{t} &=& \lim_{h \rightarrow 0} \frac{u(t) + u(t + h)}{h}\\
                             &=& \lim_{h \rightarrow 0} \frac{u(t + T) + u(t + T + h)}{h}\\
                             &=& \td{u(t + T)}{t}.&\qedhere
        \end{IEEEeqnarray*}
\end{proof}

The following proposition gives a characterization of the monotonicity of 
resistors on $L_{2, T}$ in terms of their devices laws.
\begin{proposition}
        \label{lem:monotone-r}
                A resistor is monotone on $L_{2, T}$ if and only if its device law defines a monotone relation
                        on $\R$ between $i(t)$ and $v(t)$ for all $t$.
\end{proposition}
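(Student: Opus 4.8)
The plan is to prove both implications directly, using the fact that the one-port relation $\mathcal{R}$ is simply the pointwise-in-time application of the device law $R$, so that the $L_{2,T}$ inner product of a resistor is the time-integral of the scalar (pointwise) product.

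For the ``if'' direction, I would assume $R$ is a monotone relation on $\R$ and take any two pairs $(i_1, v_1), (i_2, v_2) \in \mathcal{R}$. By the definition of $\mathcal{R}$ we have $(i_1(t), v_1(t)) \in R$ and $(i_2(t), v_2(t)) \in R$ for every $t \in [0,T]$, so monotonicity of $R$ gives $(i_1(t) - i_2(t))(v_1(t) - v_2(t)) \geq 0$ for all $t$. The map $t \mapsto (i_1(t) - i_2(t))(v_1(t) - v_2(t))$ is integrable on $[0,T]$, being a product of two $L_{2,T}$ signals (Cauchy--Schwarz), so integrating the pointwise inequality yields
\[
\ip{i_1 - i_2}{v_1 - v_2} = \int_0^T \big(i_1(t) - i_2(t)\big)\big(v_1(t) - v_2(t)\big) \dd{t} \geq 0,
\]
which is exactly monotonicity of $\mathcal{R}$ on $L_{2,T}$.

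For the ``only if'' direction I would argue by contraposition. Suppose $R$ fails to be monotone on $\R$, i.e.\ there exist $(a_1, b_1), (a_2, b_2) \in R$ with $(a_1 - a_2)(b_1 - b_2) < 0$. Consider the constant trajectories $i_1 \equiv a_1$, $v_1 \equiv b_1$, $i_2 \equiv a_2$, $v_2 \equiv b_2$. Constant signals are square integrable over the bounded interval $[0,T]$, so they lie in $L_{2,T}$, and since the device law holds at each time we have $(i_1, v_1), (i_2, v_2) \in \mathcal{R}$. Then
\[
\ip{i_1 - i_2}{v_1 - v_2} = T\,(a_1 - a_2)(b_1 - b_2) < 0,
\]
so $\mathcal{R}$ is not monotone on $L_{2,T}$, completing the contrapositive.

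There is no genuinely hard step here; the only points needing a word of justification are that constant signals belong to $L_{2,T}$ (so they are legitimate test trajectories) and that the integrand in the ``if'' direction is integrable, which licenses passing from the pointwise inequality to the inequality on the integral. The substance of the statement is just this equivalence between the $L_{2,T}$ inner product and the integral of the pointwise product, together with the observation that constant signals suffice to detect any pointwise violation of monotonicity.
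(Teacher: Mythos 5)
Your proof is correct and follows essentially the same route as the paper: pointwise monotonicity of the device law integrates to monotonicity of the $L_{2,T}$ inner product, and the converse is obtained by testing with constant trajectories built from a pair violating scalar monotonicity. The extra remarks on integrability and on constant signals lying in $L_{2,T}$ are minor justifications the paper leaves implicit.
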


\begin{proof}
        \emph{If:} By monotonicity of the device law on $\R$, we have
        \begin{IEEEeqnarray*}{rCl}
                (i_1(t) - i_2(t))(v_1(t) - v_2(t)) \geq 0 \text{ for all } t,
        \end{IEEEeqnarray*}
        from which it follows that
        \begin{IEEEeqnarray*}{rCl}
                \ip{i_1 - i_2}{v_1 - v_2} &=& \int_{0}^{T} (i_1(t) -
                i_2(t))(v_1(t) - v_2(t)) \dd{t}\\
                                               &\geq& 0.
        \end{IEEEeqnarray*}
        \emph{Only if:} Assume by contradiction that the device law is not monotone
        on $\R$, that is, there exist $\iota_1, \iota_2 \in \R$ such that 
        \begin{IEEEeqnarray*}{rCl}
                (\iota_1 - \iota_2)(R(\iota_1) - R(\iota_2)) < 0.
        \end{IEEEeqnarray*}
        Taking the constant signals $i_1(t) = \iota_1$, $i_2(t) = \iota_2$ on $L_{2,
        T}$ shows that the resistor is not monotone on $L_{2, T}$.
\end{proof}

A natural question is whether the same can be said for inductors and capacitors -- are
these devices monotone if their device laws $C$ and $L$ are monotone?  A striking result of
\textcite{Kulkarni2001} is that this is true if and only if the device laws are
\emph{linear}.

\begin{proposition}
        \label{lem:monotone-l-c}
        Capacitors and inductors with monotone device laws on $\R$ are monotone on
        $L_{2, T}$ for all $T \geq 0$ if and only if their device laws are linear.
\end{proposition}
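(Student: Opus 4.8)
\emph{The plan is to prove the two implications separately, both through a single integration-by-parts identity; I treat the capacitor, the inductor being dual under exchanging $i$ and $v$ (and $C$ and $L$).}

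First, the \emph{if} direction. If the device law is linear, $C(v) = cv$ with $c \geq 0$ (an additive constant is immaterial since $i = \td{}{t}C(v)$), then for any two $T$-periodic trajectories $(v_1, i_1), (v_2, i_2) \in \mathcal{C}$ one has $i_j = c\dot v_j$ and
\[
        \ip{v_1 - v_2}{i_1 - i_2} = c\int_0^T (v_1 - v_2)(\dot v_1 - \dot v_2)\dd{t} = \frac{c}{2}\int_0^T \td{}{t}(v_1 - v_2)^2 \dd{t} = 0
\]
by $T$-periodicity, so $\mathcal{C}$ is monotone on $L_{2,T}$ for every $T$.

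For the \emph{only if} direction I would integrate by parts, using periodicity to kill the boundary term: with $q_j = C(v_j)$,
\[
        \ip{v_1 - v_2}{i_1 - i_2} = \int_0^T (v_1 - v_2)\,\td{}{t}(q_1 - q_2)\dd{t} = -\int_0^T (\dot v_1 - \dot v_2)\bigl(C(v_1) - C(v_2)\bigr)\dd{t},
\]
so monotonicity of $\mathcal{C}$ on $L_{2,T}$ is equivalent to $\int_0^T (\dot v_1 - \dot v_2)(C(v_1) - C(v_2))\dd{t} \leq 0$ for all $T$-periodic $v_1, v_2$. The key step is to note that this integral is \emph{odd} under the time reversal $v_j(t)\mapsto v_j(T - t)$ --- the substitution $s = T-t$ flips the sign of each $\dot v_j$ and leaves each $C(v_j)$ fixed --- so applying the inequality to the reversed trajectories as well forces $\int_0^T (\dot v_1 - \dot v_2)(C(v_1) - C(v_2))\dd{t} = 0$ for all $T$-periodic $v_1, v_2$. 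Expanding and discarding the two telescoping terms $\int_0^T \dot v_j\,C(v_j)\dd{t} = 0$ turns this into $\int_0^T \bigl(C(v_2)\dot v_1 + C(v_1)\dot v_2\bigr)\dd{t} = 0$, which is exactly the circulation around the closed planar curve $t\mapsto(v_1(t),v_2(t))$ of the vector field $\mathbf{V}(a,b) = (C(b), C(a))$. Since any smooth closed curve in $\R^2$ arises from such a $T$-periodic pair, $\mathbf{V}$ has zero circulation around every loop, hence $\mathbf{V} = \nabla F$; reading off $\partial_a F = C(b)$ and $\partial_b F = C(a)$, integrating the first in $a$ and the second in $b$, and comparing gives $a(C(b) - C(0)) = b(C(a) - C(0))$ for all $a,b$, i.e.\ $C$ is affine, with nonnegative slope because the device law is monotone. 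Absorbing the constant, $\mathcal{C}$ has a linear device law.

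I expect the main obstacle to be the passage from the monotonicity \emph{inequality} to the \emph{identity}: a nonlinear $C$ can easily satisfy a one-sided integral bound, and it is only the time-reversal symmetry available on periodic trajectories that excludes it. (A variational alternative, once the identity holds, is to put $v_2 = v_1 + \delta w$ and extract the $\mathcal{O}(\delta^2)$ coefficient $\int_0^T C'(v_1)\,w\dot w\dd{t} = 0$, then let $v_1$ range over signals sweeping an arbitrary interval to force $C'$ constant; but this route needs extra care about the regularity of $C'$, which the circulation argument sidesteps.)
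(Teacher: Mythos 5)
Your proof is correct, but it takes a genuinely different route from the paper. The paper does not argue from scratch: it simply invokes Lemma~A.2 of Kulkarni and Safonov, observing that the counterexample signals used there (truncated square waves) live in $L_{2,T}$ with $T$ equal to the truncation length; that argument is constructive, exhibiting explicit periodic trajectories that violate monotonicity whenever the device law is nonlinear. Your argument instead exploits two structural facts: time reversal maps $T$-periodic capacitor trajectories $(v,i)$ to $T$-periodic capacitor trajectories $(v(T-\cdot),-i(T-\cdot))$ while flipping the sign of $\langle v_1-v_2 , i_1-i_2\rangle$, which upgrades the monotonicity inequality to the identity $\langle v_1-v_2, i_1-i_2\rangle = 0$ (incremental losslessness); and the resulting circulation identity for the field $(a,b)\mapsto (C(b),C(a))$ around arbitrary closed curves forces that field to be conservative, whence the functional equation $a(C(b)-C(0))=b(C(a)-C(0))$ and affinity of $C$. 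This is self-contained, more conceptual, and in fact proves a slightly stronger statement (monotonicity for a single period $T>0$ already forces linearity), at the cost of mild implicit regularity bookkeeping (continuity of $C$ for the potential construction, admissibility of the smooth test trajectories) that the paper's citation sidesteps. Two small points you should make explicit: that the time-reversed voltages are indeed in the graph of $\mathcal{C}$ (they are, with current $-i(T-\cdot)$), and that your conclusion is ``affine with nonnegative slope,'' which matches the proposition only after absorbing the constant into the charge, exactly as you note, since only $\mathrm{d}C(v)/\mathrm{d}t$ enters the port behavior.
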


\begin{proof}
        The result is given by \autocite[Lemma~A.2]{Kulkarni2001}, noting that the
        signals used in their proof (Equation~A.4) are truncated square waves, which
        are signals on $L_{2, T}$ for $T$ equal to the length of the truncation.
\end{proof}

We now collect some results which show that, under mild conditions, series/parallel RLC circuits
define operators on $L_{2, T}$.

\begin{proposition}\label{thm:periodic_series_parallel}
        A parallel interconnection of $n$ one-ports which map $T$-periodic
        voltages to $T$-periodic currents also maps $T$-periodic voltages
        to $T$-periodic currents.  
\end{proposition}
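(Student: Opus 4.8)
The plan is to argue directly from Kirchhoff's laws, using the description of the parallel interconnection established in Section~\ref{sec:1ports}: the $v\to i$ relation of a parallel connection of one-ports $G_1,\dots,G_n$ is the sum $G_1+\cdots+G_n$, obtained from KVL ($v=v_1=\cdots=v_n$), the device relations $(v_j,i_j)\in G_j$, and KCL ($i=\sum_{j=1}^n i_j$). So I would fix a $T$-periodic port voltage $v$, note that by KVL every branch sees the same voltage $v_j=v$, which is $T$-periodic, and invoke the hypothesis that each $G_j$ maps $T$-periodic voltages to $T$-periodic currents to conclude that each branch current $i_j$ is $T$-periodic.

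The only remaining step is to observe that a finite sum of $T$-periodic signals is again $T$-periodic: by KCL, $i(t+T)=\sum_{j=1}^n i_j(t+T)=\sum_{j=1}^n i_j(t)=i(t)$ for all $t$. Equivalently, one can phrase this as an induction on $n$, where the inductive step is the two-element case: if $G$ and $H$ each map $T$-periodic inputs to $T$-periodic outputs, so does $G+H$, by exactly the same one-line computation. This mirrors, at the level of periodicity rather than monotonicity, the role that Lemma~\ref{lem:monotone_properties} plays in Proposition~\ref{prop:series_parallel}.

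There is essentially no obstacle here; the statement is elementary once the parallel interconnection is recognised as an operator sum. The two points worth stating carefully are that the port voltage is genuinely common to all branches (this is precisely the KVL constraint defining the parallel interconnection) and that $n$ is finite, so the pointwise-sum identity is valid. Periodicity of the individual branch currents is not to be proved but is the standing hypothesis on the elements — for resistors, capacitors and inductors it is supplied by Proposition~\ref{thm:periodic_elements}. By the usual exchange of the roles of $i$ and $v$, the analogous statement for series interconnections follows without further work.
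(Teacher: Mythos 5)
Your argument is correct and is essentially the paper's own proof: the paper likewise reduces the claim to the one-line observation that a sum of $T$-periodic signals is $T$-periodic, so that periodicity is preserved by Kirchhoff's laws. Your additional care about KVL forcing the common branch voltage and finiteness of $n$ is fine but adds nothing beyond the paper's argument.
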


\begin{proof}
        Periodicity is preserved under summation of signals, and therefore
        preserved by Kirchoff's laws.  Indeed, if $y(t) = u_1(t) + u_2(t)$, and $u_1$
        and $u_2$ are both $T$-periodic, then $y(t + T) = u_1(t+T) + u_2(t+T) =
        u_1(t) + u_2(t) = y(t)$.
\end{proof}

Next, we show that one-port circuits which obey simple conditions on their 
interconnections map periodic inputs to periodic outputs.
Other classes of systems with this property include contractive state space systems
\autocite{Sontag2010} and approximately finite memory input/output maps
\autocite{Sandberg1992}.

\begin{theorem}\label{thm:periodic}
        Let $M$ be the relation on $L_{2, T}$, from either $v$ to $i$ or $i$ to $v$, of a
        1-port constructed from the series and parallel interconnection of $n$
        constituent one-ports $M_i$, such that the
        construction obeys the following conditions
        \begin{enumerate}
                \item $M_i:L_{2, T} \to L_{2, T}$ for all $i$;\label{condition:dom}
                \item any one-port which must be inverted during the construction is
                        coercive and Lipschitz.\label{condition:resistor}
        \end{enumerate}
        Then $M$ maps any input in $L_{2, T}$ to a unique output in
        $L_{2, T}$.
\end{theorem}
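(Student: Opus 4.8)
The plan is to induct on the number of series and parallel interconnections used to assemble $M$, proving at each stage that the sub-circuit in question is a \emph{single-valued} operator that is \emph{defined on all of} $L_{2,T}$ and takes values in $L_{2,T}$ --- which is exactly the assertion that it maps any input in $L_{2,T}$ to a unique output in $L_{2,T}$. The base case is a circuit built from no interconnections, i.e.\ a single element $M_i$, possibly read in its reversed direction: if no reversal is needed, condition~\ref{condition:dom} gives the claim directly, and if the relation $M_i$ must be inverted then by condition~\ref{condition:resistor} it is coercive and Lipschitz and the inversion step below applies.

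For the inductive step, the series/parallel duality noted after Figure~\ref{fig:series-block} lets me treat only a parallel interconnection, $M = \tilde M_1 + \tilde M_2$, where each $\tilde M_j$ is either a sub-circuit $M^{(j)}$ presented as a $v \to i$ relation, or the inverse $(M^{(j)})^{-1}$ of such a sub-circuit. Each $M^{(j)}$ is assembled from strictly fewer interconnections, so by the induction hypothesis it is a single-valued operator $L_{2,T} \to L_{2,T}$. The step then rests on two facts: first, the sum of two single-valued operators $L_{2,T} \to L_{2,T}$ is again such an operator, which is immediate since $L_{2,T}$ is a vector space and periodicity is preserved by addition (cf.\ Proposition~\ref{thm:periodic_series_parallel}); and second, if $\tilde M_j = (M^{(j)})^{-1}$ then $\tilde M_j$ is itself a single-valued operator $L_{2,T} \to L_{2,T}$. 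All the real content is in the second fact.

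To prove it, write $S := M^{(j)}$, so that by the induction hypothesis $S$ is single-valued and total on $L_{2,T}$, and by condition~\ref{condition:resistor} $S$ is $\mu$-coercive (equivalently $\mu$-strongly monotone) and $\lambda$-Lipschitz for some $\mu,\lambda>0$. Single-valuedness of $S^{-1}$ follows from the hierarchy in Section~\ref{sec:more_properties}: $S$ $\mu$-coercive $\Rightarrow$ $S^{-1}$ $\mu$-cocoercive $\Rightarrow$ $S^{-1}$ is $(1/\mu)$-Lipschitz $\Rightarrow$ $S^{-1}$ is single-valued. The remaining, and only genuinely nontrivial, point is that $S^{-1}$ is defined everywhere, i.e.\ $S$ is surjective onto $L_{2,T}$. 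I would prove this by a Banach fixed-point argument: fixing $w \in L_{2,T}$ and taking step size $\alpha = \mu/\lambda^2$, the map $u \mapsto u - \alpha(S(u) - w)$ is a self-map of $L_{2,T}$ with Lipschitz constant $\sqrt{1 - 2\alpha\mu + \alpha^2\lambda^2} = \sqrt{1 - \mu^2/\lambda^2} < 1$ (using coercivity and the Lipschitz bound --- precisely the estimate behind the formula for $\beta_1$ quoted after Theorem~\ref{thm:nested_convergence} --- together with $\mu \le \lambda$ by Cauchy--Schwarz), and its unique fixed point solves $S(u) = w$. Hence $S$ is onto and $S^{-1} : L_{2,T} \to L_{2,T}$ is single-valued. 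Alternatively one may cite the Browder--Minty surjectivity theorem, after noting that a monotone, single-valued, everywhere-defined, continuous operator on a Hilbert space is maximal monotone, so that $S$ being maximal monotone and strongly monotone forces it to be surjective.

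I expect the surjectivity claim inside this inversion step to be the only part requiring more than bookkeeping; the induction itself, and its base and additive steps, are routine once the target is read as ``single-valued and total on $L_{2,T}$''. The one place to be careful is the invocation of the induction hypothesis: conditions~\ref{condition:dom} and~\ref{condition:resistor} are imposed on the whole construction and are therefore inherited by every sub-circuit, so a sub-circuit that has to be inverted genuinely does satisfy the coercivity and Lipschitz hypotheses needed to carry out the inversion step on it. Note also that no monotonicity of the circuit as a whole is needed --- coercivity is used only for the elements actually being inverted, as supplied by condition~\ref{condition:resistor}.
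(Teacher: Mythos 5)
Your proposal is correct and follows essentially the same route as the paper: reduce the claim to showing that the ``single-valued operator on $L_{2,T}$'' property is preserved under summation and inversion, and handle inversion by proving surjectivity of a coercive, Lipschitz operator via the contraction $u \mapsto u - \alpha(S(u)-w)$ and the Banach fixed point theorem, exactly the estimate $1 - 2\alpha\mu + \alpha^2\lambda^2 < 1$ used in the paper. Your explicit induction and the separate single-valuedness remark for $S^{-1}$ are just a more formal bookkeeping of the same argument.
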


\begin{proof}
        By assumption, each of the
        relations $M_i$ maps $T$-periodic inputs to $T$-periodic outputs (we denote
        this property by PIPO for the remainder of this proof).  We show that
        constructing a circuit under the given conditions preserves this property.
        This amounts to showing that the PIPO property is preserved under summation
        and inversion.  We have already observed that it is preserved under summation
        in Proposition~\ref{thm:periodic_series_parallel}. It remains to show
        that inversion preserves the PIPO property if the one-port to be inverted is
        coercive and Lipschitz.  Let $F$ be $\mu$-coercive and $\lambda$-Lipschitz.
        We will proceed by showing that $F$ is invertible on $L_{2, T}$: for every
        $y^\star \in L_{2, T}$, there exists $u^\star \in L_{2, T}$ such that
        $u^\star \in F^{-1}(y^\star)$.  This shows that $F^{-1}$ maps periodic
        $y^\star$ to periodic $u^\star$, that is, has the PIPO property.

        Let $y^\star \in L_{2, T}$ be arbitrary and define the incremental relation $\Delta F(u) =
        F(u) - y^\star$ on $L_{2, T}$. $\Delta
        F$ has the same coercivity and Lipschitz properties as $F$, independent of
        the choice of $y^\star$.  
        Given $\gamma > 0$, it is straightforward to check that the operator $I - \gamma \Delta F$ is an operator on
        $L_{2, T}$.
        We show that this operator is  a contraction mapping on $L_{2, T}$
        for small enough $\gamma > 0$, using the standard argument for proving
        convergence of the forward step algorithm \autocite[$\S$2.4.3]{Ryu2022}.  Indeed,
        \begin{IEEEeqnarray*}{l}
        \norm{(I - \gamma \Delta F)(x) - (I - \gamma\Delta F)y}^2 \\= \norm{x - y}^2 - 
        2\gamma \ip{x - y}{\Delta Fx - \Delta Fy} + \gamma^2\norm{\Delta Fx -
        \Delta Fy}^2\\
          \leq \left(1 - 2\gamma \mu + \gamma^2\lambda^2 \right) \norm{x - y}^2,
        \end{IEEEeqnarray*}
        where the inequality follows from the definitions of coercive and
        Lipschitz operators.  Solving $0 < \left(1 - 2\gamma \mu +
        \gamma^2\lambda^2 \right) < 1$ gives an allowable range of $\gamma \in (0,
        2\mu/\lambda^2)$ for $I - \gamma \Delta
        F$ to be a contraction mapping on $L_{2, T}$. It then follows from the Banach fixed point
        theorem that $I - \gamma \Delta F$ has a unique fixed point $u^\star \in L_{2, T}$ \autocite[\S
        2.4.2]{Ryu2021a}, \autocite{Banach1922}:
\begin{IEEEeqnarray*}{lrCl}
&u^\star &=& u^\star - \gamma \Delta F(u^\star)\\
\iff &\Delta F(u^\star) &=& F(u^\star) - y^\star = 0\\
\iff & F(u^\star) &=& y^\star.
\end{IEEEeqnarray*}
        As $y^\star \in L_{2, T}$ is arbitrary, this shows that
        $F$ is invertible on $L_{2, T}$. 
\end{proof}
When applied to RLC circuits, condition~\ref{condition:dom} of Theorem~\ref{thm:periodic}
requires capacitors to be connected in parallel and inductors to be connected in series. 

Nonlinear RLC circuits have also been recently studied as port interconnections of
incrementally port-Hamiltonian systems \autocite{Camlibel2022}.  While
port-Hamiltonian systems are always passive, the authors observe that incrementally
port-Hamiltonian systems are not always monotone.  They arrive at the same conclusion
that those RLC circuits that are monotone are precisely those with nonlinear monotone
resistors and LTI capcitors and inductors.

An interesting question is whether the class of monotone resistors, capacitors and
inductors can be extended beyond those allowed by Propositions~\ref{lem:monotone-r}
and~\ref{lem:monotone-l-c}.  One possibility is to allow time-varying energy storage
devices.
\textcite{Georgiou2020} define time-varying, or adjustable, capacitors and inductors,
termed the \emph{varcapacitor} and \emph{varinductor}:
\begin{IEEEeqnarray*}{rCll}
        i(t) &=& c(t) \td{}{t}(c(t) v(t))&\qquad\text{varcapacitor}\\
        v(t) &=& l(t) \td{}{t}(l(t)i(t))&\qquad\text{varinductor}.
\end{IEEEeqnarray*}
If $i(t)$, $v(t)$, $l(t)$ and $c(t)$ are $T$-periodic, these devices are monotone on
$L_{2, T}$.

\begin{proposition}
        Varcapacitors with $T$-periodic $c(t)$ and varinductors with $T$-periodic
        $l(t)$ are monotone on $L_{2, T}$. 
\end{proposition}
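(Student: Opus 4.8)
The plan is to establish something slightly stronger than monotonicity: for each of these devices the \emph{incremental} inner product of the two port variables vanishes identically, so the device is incrementally lossless on $L_{2,T}$. I treat the varcapacitor; the varinductor is handled by the identical argument with the roles of $i$ and $v$ (and of $c$ and $l$) exchanged.

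First I would observe that, with the modulation signal $c(t)$ fixed, the relation $v \mapsto i$ given by $i(t) = c(t)\,\td{}{t}(c(t)v(t))$ is linear in $v$. Hence for any two pairs $(v_1, i_1)$ and $(v_2, i_2)$ in the relation, the differences $\Delta v = v_1 - v_2$ and $\Delta i = i_1 - i_2$ obey the same law, $\Delta i(t) = c(t)\,\td{}{t}(c(t)\Delta v(t))$.

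Next I would set $w(t) = c(t)\,\Delta v(t)$ and note the telescoping identity
\[
\Delta v(t)\,\Delta i(t) = \bigl(c(t)\Delta v(t)\bigr)\,\td{}{t}\bigl(c(t)\Delta v(t)\bigr) = w(t)\,\td{}{t}w(t) = \frac{1}{2}\,\td{}{t}\bigl(w(t)^2\bigr),
\]
so that
\[
\ip{\Delta v}{\Delta i} = \int_0^T \frac{1}{2}\,\td{}{t}\bigl(w(t)^2\bigr)\,\dd{t} = \frac{1}{2}\bigl(w(T)^2 - w(0)^2\bigr).
\]
Finally I would invoke periodicity: since $c$ and $\Delta v$ are both $T$-periodic, so is their product $w$, whence $w(T) = w(0)$ and therefore $\ip{\Delta v}{\Delta i} = 0 \geq 0$, which is monotonicity. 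The varinductor follows by the same computation with $w(t) = l(t)\,\Delta i(t)$ and the $T$-periodicity of $l$ and $\Delta i$.

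The one point requiring care is regularity: the calculation applies the fundamental theorem of calculus to $w(t)^2$, which needs $w = c\,\Delta v$ (resp. $l\,\Delta i$) to be absolutely continuous on $[0,T]$. This follows from the standing differentiability hypotheses on the device laws and signals, exactly as for ordinary capacitors and inductors earlier in the section. With that in hand the proof is a one-line identity, and notably no assumption on the sign or nonvanishing of $c$ (or $l$) is needed, since the substitution $w = c\,\Delta v$ never divides by $c$.
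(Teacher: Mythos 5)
Your proof is correct and follows essentially the same route as the paper's: both rewrite the incremental product as the perfect derivative $\tfrac{1}{2}\td{}{t}\bigl(c(t)(v_1(t)-v_2(t))\bigr)^2$ (resp.\ with $l$ and the currents) and conclude the integral over a period vanishes by $T$-periodicity. Your explicit remarks on linearity in $v$ for fixed $c$ and on the regularity needed for the fundamental theorem of calculus are sensible refinements of the same argument, not a different approach.
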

\begin{proof}
For a varcapacitor, we have
\begin{IEEEeqnarray*}{cl}
        &\int_0^T (v_1(t) - v_2(t))(i_1(t) - i_2(t)) \dd{t} \\
        =& \int_0^T c(t)(v_1(t) - v_2(t))\td{}{t}(  c(t) (v_1(t) - v_2(t))) \dd{t}\\
        =& \int_0^T \td{}{t} \frac{1}{2} c^2(t) (v_1(t) - v_2(t))^2 \dd{t}\\
        =& \frac{1}{2}c^2(T)v^2(T) - \frac{1}{2}c^2(0)v^2(0)\\
        =& 0.
\end{IEEEeqnarray*}
Likewise, for a varinductor, we have
\begin{IEEEeqnarray*}{+cl+x*}
        &\int_0^T (v_1(t) - v_2(t))(i_1(t) - i_2(t)) \dd{t} \\
        =& \int_0^T l(t)(i_1(t) - i_2(t))\td{}{t}(  l(t) (i_1(t) - i_2(t))) \dd{t}\\
        =& \int_0^T \td{}{t} \frac{1}{2} l^2(t) (i_1(t) - i_2(t))^2 \dd{t}\\
        =& \frac{1}{2}l^2(T)i^2(T) - \frac{1}{2}l^2(0)i^2(0)\\
        =& 0.&\qedhere
\end{IEEEeqnarray*}
\end{proof}

We now give two detailed examples of the steady-state analysis of an RLC circuit.
In order to obtain relations on $l_{2, \tau}$,
the derivative is discretized to give an operator $D$.  Any discretization may be
used.  For the examples in this paper, we use the backwards finite difference, given by the relation
\begin{IEEEeqnarray*}{rCl}
        D = \bigg\{ (u, y) \; \bigg| \; y = \tau D_\tau u \bigg\}, 
\end{IEEEeqnarray*}
where  $D_\tau$ is the $\tau \times \tau$ matrix
\begin{IEEEeqnarray*}{rCl}
D_\tau &=& 
\begin{bmatrix}
        1 & 0 &  \ldots & 0  & -1\\
        -1 & 1 &  \ldots & 0 & 0\\
        0 & -1 &  \ldots & 0 & 0\\
        \vdots &  \vdots & \ddots & \vdots &
        \vdots \\
        0 & 0 &  \ldots & -1 & 1
\end{bmatrix}.
\end{IEEEeqnarray*}

Note that $D$ is a maximal monotone relation, as $D_\tau + D_\tau\tran \succeq 0$
\autocite[$\S2.2.3$]{Ryu2021a}.  To obtain an accurate discrete model, a sufficient number of time steps must be used.

\begin{example}\label{ex:envelope_detector}

An envelope detector is a simple nonlinear circuit consisting of a diode in series
with an LTI RC filter
 (Figure~\ref{fig:envelope_detector}).  It is used to demodulate AM radio signals.

\begin{figure}[hb]
        \centering
        \includegraphics{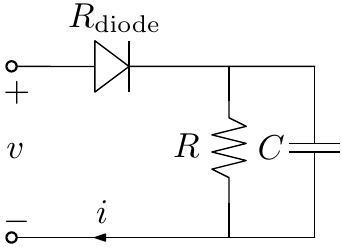}
        \caption{An envelope detector, configured as a 1-port.}%
        \label{fig:envelope_detector}
\end{figure}

We model the
diode using the Shockley equation:
\begin{equation*}
        v_{\text{diode}} = R_{\text{diode}}(i) \coloneqq nV_T \ln\left(\frac{i}{I_s} + 1\right),
\end{equation*}
where $I_s$ is the reverse bias saturation current, $V_T$ is the thermal voltage and $n$
is the ideality factor.  The $i-v$ graph of the diode relation is strictly increasing
with no endpoints; the diode relation is therefore maximal monotone.

The RC filter is itself a parallel interconnection of a resistor and capacitor, which maps voltage to current:
\begin{IEEEeqnarray*}{rCl}
        G_{RC} = C D + \frac{1}{R}I,
\end{IEEEeqnarray*}
where $I$ is the $\tau \times \tau$ identity matrix.
As $G_{RC}$ is linear, it has a Lipschitz constant $L$ equal to its
largest singular value (which will grow with the size of $D$) and is coercive with constant $m = \lambda_{\min}((G_{RC} +
G_{RC}\tran)/2)$ \autocite{Ryu2021a}. 

The incremental voltage $\Delta v = v - v^\star$ is given as a relation of $i$ by
\begin{IEEEeqnarray*}{rCl}
        \Delta v = R_\text{diode}(i) + R_{RC}(i) - v^\star.
\end{IEEEeqnarray*}

Given an input voltage $v^\star$, we solve for the corresponding current $i^\star$ using the Douglas-Rachford splitting.  This
involves applying both the resolvents $\res_{RC}$ and $\res_{\text{diode}}$.
The resolvent $\res_{RC}$ is given by $(I + \lambda G_{RC}^{-1})^{-1}$.  This matrix
is pre-computed and stored in memory.
The resolvent of the diode, $\res_{\text{diode}}$, is given by $\res_{\text{diode}}^{-1}(x) = (I + \lambda
R_\text{diode}(x) - \lambda v^\star)$. There
is no analytic expression for this operator.  Rather, the resolvent is computed
numerically using the guarded Newton algorithm \cite{Parikh2013}.  

Figure~\ref{fig:envelope_detector_inverse} shows the results of performing this scheme with
an input of $v^\star = \sin(2\pi t)$ A, with $R = 1\, \Omega$, $C = 1$ F, $I_s =
1\times10^{-14}$ A, $n = 1$ and $V_T = 0.02585$ V.  The number $\tau$ of time steps used is
$500$.
\end{example}

\begin{figure}[h]
        \centering
        \begin{tikzpicture}
                \begin{groupplot}
                        [
                        group style={
                                group size=1 by 2,
                                vertical sep = 0.5cm
                        },
                        width=0.5\textwidth,
                        height=4cm,
                        cycle list name=colors,
                        grid=both,
                        grid style={line width=.1pt, draw=Gray!20},
                        axis x line=bottom,
                        axis y line=left
                        ]
                        \nextgroupplot[ylabel={\footnotesize Voltage (V)}, xmin=0, xmax=1]
                        \addplot[CornflowerBlue] table [x = t, y = v, col sep = comma, mark = none]{"./envelope_detector_inverse.csv"};
                                \addlegendentry{\footnotesize $v$ - input};
                                
                                \nextgroupplot[xlabel={\footnotesize Time (s)}, ylabel={\footnotesize Current (A)}, xmin=0, xmax=1]
                        \addplot[BurntOrange] table [x = t, y = i, col sep = comma, mark = none]{"./envelope_detector_inverse.csv"};
                                \addlegendentry{\footnotesize $i$ - output};
                                
                \end{groupplot}
        \end{tikzpicture}
        \caption{Input voltage $v^\star$ and the resulting current $i$ for an
        envelope detector.  One period of a periodic input and output is shown. Circuit parameters are $R = 1\, \Omega$, $C = 1$ F, $I_s =
1\times10^{-14}$ A, $n = 1$ and $V_T = 0.02585$ V.
Algorithmic parameters are $\alpha = 0.01$, $\epsilon = 1\times10^{-5}$ and $500$ time
steps.}%
        \label{fig:envelope_detector_inverse}
\end{figure}
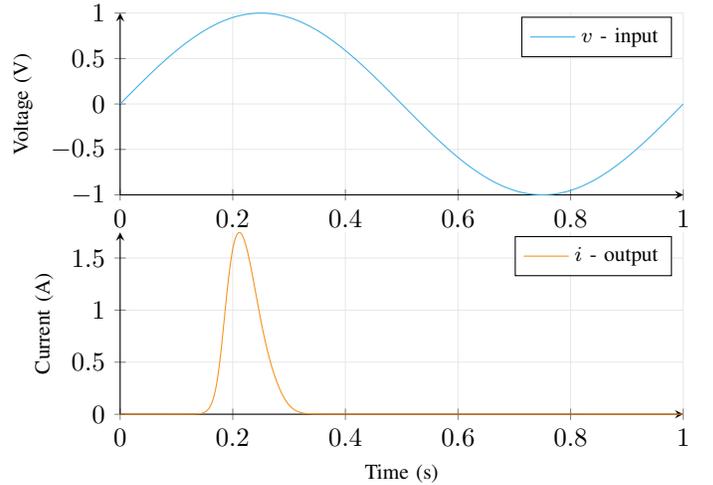

\begin{example}\label{ex:large_scale}
 In this example, we analyze the large-scale circuit shown in 
 Figure~\ref{fig:large_scale}, which consists
 of $n$ identical units, each consisting of a diode and LTI RC filter.  The diode and
 RC filter are modelled as in Example~\ref{ex:envelope_detector}.

\begin{figure*}[t]
        \centering
        \includegraphics{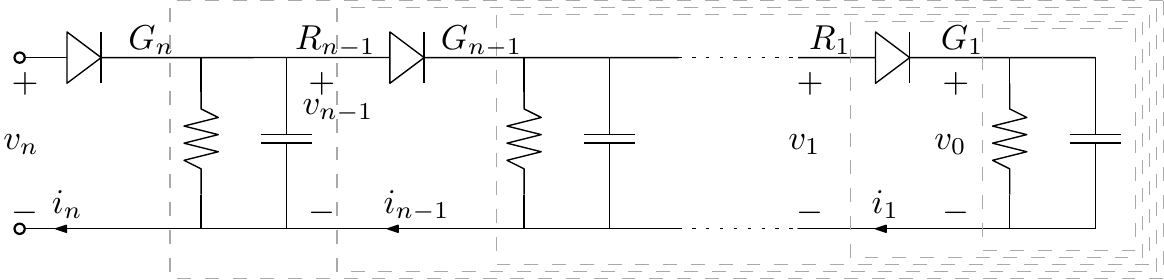}
        \caption{Circuit for Example~\ref{ex:large_scale}.  The circuit consists of
        $n$ repeated units, each consisting of a diode and LTI RC filter.}%
        \label{fig:large_scale}
\end{figure*}

When viewed as an interconnection of one-ports, the circuit has a recursive
structure.  Following the notation of Figure~\ref{fig:large_scale}, for $1 \leq m \leq n$, we have:
\begin{IEEEeqnarray*}{rCl}
        v_{m} &=& R_m(i_m)\\
              &=& R_{\text{diode}}(i_m) + G_{n}^{-1}(i_m),\\
        i_{m} &=& G_{m-1}(v_{m - 1})\\
              &=& G_{RC}(v_{m-1}) + R^{-1}_{m-1}(v_{m-1}).
\end{IEEEeqnarray*}
The base case is $G_1 = G_{RC}$.  This circuit has the form of
Figure~\ref{fig:nested_algo}, with $R_0$ = $G_{RC}^{-1}$, $R_j =  R_{\text{diode}}$ and
$G_j = G_{RC}$ for all $j>1$.  The circuit is solved using the nested
forward/backward algorithm introduced in Section~\ref{sec:nested}.

Figure~\ref{fig:large_scale_result} shows the results of performing this
scheme with $n = 100,000$ repeated units (a total of 300,000 components).  
The input is $v^\star = 1 +\sin(2\pi t)$ A, with circuit parameters $R = 1\, \Omega$, $C = 1$ F, $I_s =
1\times10^{-14}$ A, $n = 1$ and $V_T = 0.02585$ V.  The number of time steps used is
$256$.  Every $\alpha_j$ is set to $1.5$. With $n=100,000$ units, computation took 1937 s on a standard desktop
computer.  With $n=10$ units, computation took an average of 243 ms, over 21 runs.  

In this configuration, the circuit and algorithm do not satisfy the convergence
criteria of Theorem~\ref{thm:nested_convergence}.  $R_0$ is $1$-Lipschitz and
$1/512$-coercive (verified by calculating its singular values and spectrum),
so the theorem applies, however the $\alpha_j$ cannot be selected to make the
spectral radius of $\mathcal{A}$ less than $1$.  If, however, resistors are
placed in parallel with each diode, these elements become coercive, and the
$\alpha_j$ can be chosen so that convergence is guaranteed by
Theorem~\ref{thm:nested_convergence}.  The range of allowable step sizes is made
larger by inserting a resistor in series with the capacitor in $R_0$ (which increases
its coercivity constant).  The insertion of fictitious resistances to aid
convergence is standard in circuit simulation -- see, for instance, the
$\texttt{gmin}$ method in SPICE \autocite[$\S$15.3.5]{Vogt2022}.  The fact that
this example converges rapidly \emph{without} employing this trick further motivates
the search for less conservative convergence conditions than that given by
Theorem~\ref{thm:nested_convergence}.
\end{example}

\begin{figure}[h]
        \centering
        \begin{tikzpicture}
                \begin{groupplot}
                        [
                        group style={
                                group size=1 by 2,
                                vertical sep = 0.5cm
                        },
                        width=0.5\textwidth,
                        height=4cm,
                        cycle list name=colors,
                        grid=both,
                        grid style={line width=.1pt, draw=Gray!20},
                        axis x line=bottom,
                        axis y line=left
                        ]
                        \nextgroupplot[ylabel={\footnotesize Voltage (V)}, xmin=0, xmax=1]
                        \addplot[CornflowerBlue] table [x = t, y = v, col sep =
                                comma, mark = none]{"./large_scale_100k.csv"};
                                \addlegendentry{\footnotesize $v_n$ - input};
                                
                                \nextgroupplot[xlabel={\footnotesize Time (s)}, ylabel={\footnotesize Current (A)}, xmin=0, xmax=1]
                        \addplot[BurntOrange] table [x = t, y = i, col sep = comma,
                                mark = none]{"./large_scale_100k.csv"};
                                \addlegendentry{\footnotesize $i_n$ - output};
                                
                \end{groupplot}
        \end{tikzpicture}
        \caption{Input voltage $v_n$ and the resulting current $i_n$ for the
        circuit of Figure~\ref{fig:large_scale}, with $n=100,000$. Circuit parameters
        are $R = 1\, \Omega$, $C = 1$ F, $I_s =
1\times10^{-14}$ A, $n = 1$ and $V_T = 0.02585$ V.  Algorithm parameters are
$\alpha_j = 1.5$ for all $j$ and $\epsilon = 1\times10^{-4}$.  One period of a periodic input and output is shown.}%
\label{fig:large_scale_result}
\end{figure}
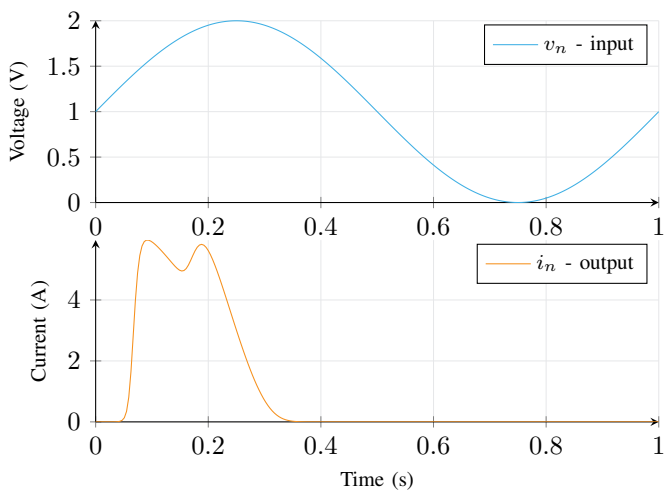

\section{Memristive circuits}
\label{sec:conductance}

In 1976, \textcite{Chua1976} introduced the class of 
\emph{memristive systems}, described by state space models of the form
\begin{subequations}\label{eq:memristive}
\begin{IEEEeqnarray}{rCl}
        \dot{x} &=& f(x, i, t)\\
              v &=& g(x, i, t)i.
\end{IEEEeqnarray}
\end{subequations}
This model class describes systems which behave like resistors, in that they
cannot store energy and do not produce a phase shift, but, unlike resistors, do have memory.  This
work was motivated by systems such as the Hodgkin-Huxley neural membrane model
\autocite{Hodgkin1952}, thermistors and discharge tubes.
In this section, we investigate the maximal monotonicity of some memristive circuits.

If $\dot{x} = f(x, u)$ is a contractive, time-invariant state space system and $u(t)$ is a
$T$-periodic input, there is a unique, globally asymptotically stable $T$-periodic
output $y(t)$ to the memristive system~\eqref{eq:memristive} \autocite{Lohmiller1998}.  The
memristive system then defines an operator on $L_{2, T}$, mapping the $T$-periodic
input $u(t)$ to the $T$-periodic output $y(t)$.

To determine the monotonicity properties of memristive systems, 
we use the Scaled Relative Graph (SRG).
The SRG of an operator is a region of the extended complex plane, from which the
incremental properties of the operator can be easily read.  SRGs were recently
introduced by \textcite{Ryu2021} for the study of monotone operator methods in
optimization, and have been used for the study of systems in feedback by the authors
in references \autocite{Chaffey2021c, Chaffey2022a}.  We refer the interested reader
to these references for the details of SRG analysis.
An operator is
$\mu$-monotone if and only if its SRG lies in the region $\{z \in \C\;|\; \Re(z) \geq
\mu\}$ \autocite[Prop. 3.3 and Thm. 3.5]{Ryu2021}.

\begin{example}\label{ex:conductance}
The Hodgkin-Huxley model represents a nerve axon membrane as a parallel
interconnection of active \emph{ion channels} with a capacitor
\autocite{Hodgkin1952}.  Each ion channel is a time-varying conductance, which may be
modelled as a memristive system.  In this example, we consider
the potassium conductance $i = G_\mathrm{K}(v)$, which is given by the equations
\begin{IEEEeqnarray*}{rCl}
        i &=& \bar{g}_\mathrm{K} n^4(v - v_\mathrm{K})\\
        \td{n}{t} &=& \alpha_n (v)(1 - n) - \beta_n(v)n\\
        \alpha_n(v) &=& \frac{0.01 (10 + v)}{\exp(1 + v/10) - 1}\\
        \beta_n(v) &=& 0.125\exp(v/80).
\end{IEEEeqnarray*}
Following \textcite{Hodgkin1952a}, the constants $\bar{g}_\mathrm{K}$ and
$v_\mathrm{K}$ are set to $19$ m mho$/$cm$^2$ and $12$ mV, respectively.
The dynamics in $n$ are contractive \autocite[Prop. 1]{Burghi2021}, therefore the
potassium conductance defines an operator on $L_{2, T}$.

The analytical SRG of the potassium conductance is difficult to determine, but 
we can test its monotonicity by sampling its SRG.
Figure~\ref{fig:K_srg} shows points in the SRG of the potassium conductance, computed
over signals of the form $u = \alpha \sin(\gamma t) + \delta$, for real parameters
$\alpha, \gamma, \delta$.  This plot suggests that the potassium conductance is $(-0.002)$-monotone on $L_{2, T}$.  
While we do not have a theoretical guarantee that this is the case, we can test
whether the potassium conductance behaves as if it is $(-0.002)$-monotone when
connected in a circuit. 

\begin{figure}[hb]
        \centering
        \includegraphics{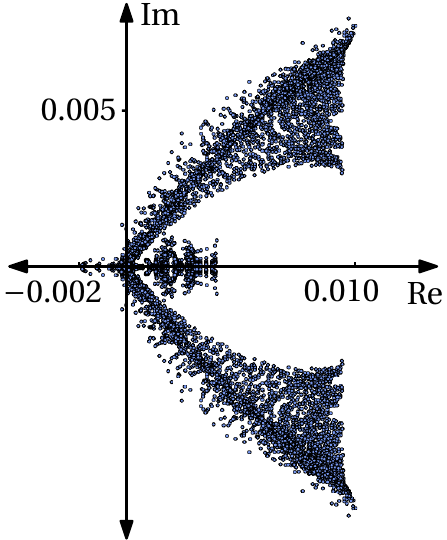}
        \caption{Sampling of the SRG of a potassium conductance. Each point $z$
        corresponds to a pair of input currents, $i_1$ and $i_2$, and corresponding
voltages $v_1$ and $v_2$.  The magnitude of $z$ is the incremental gain $\norm{v_1 -
v_2}/\norm{i_1 - i_2}$ and the argument of $z$ is the incremental rotation
$\arccos(\ip{v_1 - v_2}{i_1 - i_2}/(\norm{v_1 -
v_2}\norm{i_1 - i_2}))$.}
        \label{fig:K_srg}
\end{figure}
We consider the parallel interconnection of the potassium current with an LTI resistor.  The port relation of this circuit is given by
\begin{IEEEeqnarray*}{rCl}
        i = R^{-1} v + G_\mathrm{K}(v).
\end{IEEEeqnarray*}
Given a periodic input current $i^\star$, the corresponding voltage is solved using
the forward/backward algorithm.  The forward step is applied to $G_\mathrm{K}$, and
the backward step is applied to $R^{-1}$.  The algorithm
converges when $R \geq 500\, \Omega$, supporting the hypothesis that the potassium
conductance is $(-0.002)$-monotone. The Lissajous figure, or $i-v$ plot, is shown in
Figure~\ref{fig:K_lissajous} for an input current of $i(t) = \sin(2\pi t)$.  The
potassium conductance exhibits the characteristic zero-crossing Lissajous figure of a
memristive system \autocite{Chua1976}.
\end{example}

\begin{figure}[h]
        \centering
        \begin{tikzpicture}
                \begin{axis}
                        [
                        no markers,
                        name = ax1,
                        width=0.4\textwidth,
                        height=0.35\textwidth,
                        ticklabel style={/pgf/number format/fixed},
                        ylabel={\footnotesize Current (A)},
                        xlabel={\footnotesize Voltage (V)},
                        cycle list name=colors,
                        grid=both,
                        grid style={line width=.1pt, draw=Gray!20},
                        axis x line=bottom,
                        axis y line=left
                        ]
                        \addplot table [x=v, y=i, col sep = comma, mark =
                                none]{"./K_current.csv"};
               \end{axis}
        \end{tikzpicture}
        \caption{Lissajous figure of a potassium conductance in parallel with a $500\;
                \Omega$ resistor.  The large signal magnitudes are not physically
        realistic, but are chosen for illustrative purposes.  The Lissajous figure
always passes through the origin, a fundamental characteristic of a memristive
system.}%
        \label{fig:K_lissajous}
\end{figure}
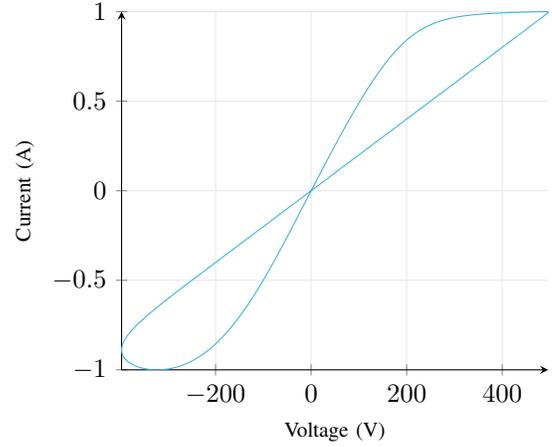

\section{Conclusions}
\label{sec:conclusions}

We have shown that monotone -- unlike passive -- nonlinear circuits retain the strong
physical and computational significance of passive linear circuits. Splitting
algorithms allow the computation to be separated in a way which mirrors the topology
of the circuit, and a new splitting algorithm has been introduced which is suited to
circuits with nested series/parallel interconnections.  This method has been demonstrated on 
the classes of circuits built from maximal monotone
resistors and LTI capacitors and inductors, and memristive dynamic conductances such
as the neuronal conductances of the Hodgkin-Huxley model.

The mathematical property of monotonicity connects the physical property of energy
dissipation with a well-established algorithmic theory for computation, for systems
modelled as nonlinear operators.  This mirrors the connection between energy
dissipation in LTI state space systems and computational methods for LMIs,
established by the theory of dissipativity \autocite{Willems1972}.  Preliminary work
by the authors \autocite{Das2022} shows that the algorithmic methods proposed here
may be extended beyond the class of systems formed by the interconnection of monotone
elements, to those systems formed by the \emph{difference} of monotone elements.
This includes systems with self-sustaining oscillations.

\section{Acknowledgements}

The authors gratefully acknowledge many insightful discussions with Fulvio Forni,
whose suggestions greatly improved this manuscript.  We would also like to thank the
anonymous reviewers for their useful feedback and suggestions.

\printbibliography

\begin{IEEEbiography}[{\includegraphics[width=1in,height=1.25in,clip,keepaspectratio]{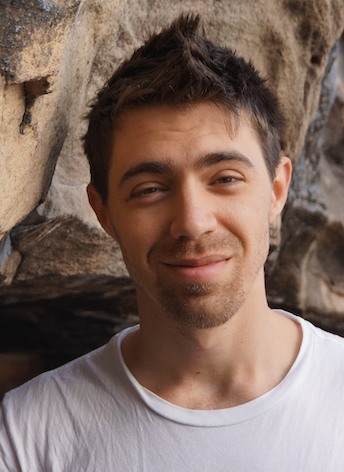}}]{Thomas Chaffey} (M17) received the B.Sc (advmath) degree in
        mathematics and computer science in 2015 and the M.P.E degree in mechanical
        engineering in 2018, from the University of Sydney, Australia.  He received
        the Ph.D. degree in 2022 from the University of Cambridge, and currently
        holds the Maudslay-Butler Research Fellowship at Pembroke College, Cambridge. 
        His research interests are in nonlinear control, optimization and
        circuit theory.  He received the Best Student Paper Award at
        the 2021 European Control Conference and the Outstanding Student Paper Award
        at the 2021 IEEE Conference on Decision and Control.
\end{IEEEbiography}

\begin{IEEEbiography}[{\includegraphics[width=1in,height=1.25in,clip,keepaspectratio]{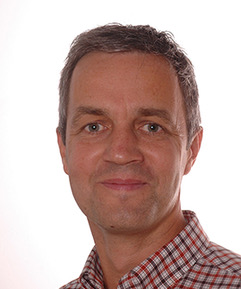}}]{Rodolphe Sepulchre} (M96,SM08,F10)  received the engineering degree and the Ph.D. degree from the Université catholique de Louvain in 1990 and in 1994, respectively.  His is Professor of engineering at Cambridge University since 2013.  His research interests are in nonlinear control and optimization, and more recently neuromorphic control.  He co-authored the monographs "Constructive Nonlinear Control" (Springer-Verlag, 1997) and "Optimization on Matrix Manifolds" (Princeton University Press, 2008). 
He is Editor-in-Chief of IEEE Control Systems. In 2008, he was awarded the IEEE Control Systems Society Antonio Ruberti Young Researcher Prize. He is a fellow of IEEE, IFAC,  and SIAM. He has been IEEE CSS distinguished lecturer  between 2010 and 2015. In 2013, he was elected at the Royal Academy of Belgium.
\end{IEEEbiography}

\end{document}